\newtheorem{theorem}{Theorem}[section]
\newcommand{\R}{{\mathbb R}}
\newcommand{\C}{{\mathbb C}}
\newcommand{\Q}{{\mathbb Q}}
\newcommand{\Z}{\mathbb Z}
\newcommand{\bm}[1]{\begin{bmatrix} #1\end{bmatrix}}
\newtheorem{lemma}[theorem]{\bf Lemma}
\newtheorem{corollary}[theorem]{\bf Corollary}
\newtheorem{fact}[theorem]{\bf Fact}
\theoremstyle{definition}
\newtheorem{remark}[theorem]{\bf Remark}
\newcommand{\bit}[1]{\langle #1\rangle}
\newsavebox{\@brx}
\newcommand{\llangle}[1][]{\savebox{\@brx}{\(\m@th{#1\langle}\)}%
  \mathopen{\copy\@brx\mkern2mu\kern-0.9\wd\@brx\usebox{\@brx}}}
\newcommand{\rrangle}[1][]{\savebox{\@brx}{\(\m@th{#1\rangle}\)}%
  \mathclose{\copy\@brx\mkern2mu\kern-0.9\wd\@brx\usebox{\@brx}}}
\newcommand{\bite}[1]{\llangle #1 \rrangle}
\newcommand{\size}[1]{\llangle #1 \rrangle}
\newcommand{\gap}{\mathrm{mingap}}
\renewcommand{\O}{O^*}
\newcommand{\ax}[1]{\widetilde{#1}}
\newcommand{\N}{\mathbb{N}}
\newcommand{\poly}{\mathrm{poly}}
\newcommand{\jnf}{\mathsf{JNF}}
\newcommand{\specfact}{\mathsf{SF}}
\newcommand{\norm}[1]{\|#1\|}
\newcommand{\maxn}[1]{\|#1\|_{\mathrm{max}}}
\newcommand{\half}{{(1/2)}}
\title{Bit Complexity of Jordan Normal Form and Polynomial Spectral Factorization}
\date{\today}
\begin{document}
\author{Papri Dey\thanks{\texttt{pdey33@gatech.edu}.} \\ Georgia Tech \and Ravi Kannan\thanks{\texttt{kannan@microsoft.com}.} \\ Microsoft Research\and Nick Ryder\thanks{\texttt{nick.ryder@berkeley.edu}.}\\ OpenAI \and Nikhil Srivastava\thanks{\texttt{nikhil@math.berkeley.edu}. Supported by NSF Grant CCF-2009011.} \\ UC Berkeley}

\maketitle

\begin{abstract}
    We study the bit complexity of two related fundamental computational problems in
linear algebra and control theory. Our results are: (1) An $\tilde{O}(n^{\omega+3}a+n^4a^2+n^\omega\log(1/\epsilon))$
time algorithm for finding an $\epsilon-$approximation to the Jordan Normal form of an integer matrix with $a-$bit entries, where $\omega$ is the exponent of matrix multiplication. (2) An
$\tilde{O}(n^6d^6a+n^4d^4a^2+n^3d^3\log(1/\epsilon))$ time algorithm for $\epsilon$-approximately computing the spectral factorization
$P(x)=Q^*(x)Q(x)$ of a given monic $n\times n$ rational matrix polynomial of
degree $2d$ with rational $a-$bit coefficients having $a-$bit common denominators, which satisfies $P(x)\succeq 0$ for all real $x$.  The first algorithm is used as a subroutine in the second one.

Despite its being of central importance, polynomial complexity bounds were not 
previously known for spectral factorization, and for Jordan form the best previous best running time was an unspecified polynomial in $n$ of degree at least twelve \cite{cai1994computing}. Our algorithms are simple and judiciously combine techniques from numerical and symbolic computation, yielding significant advantages over either approach by itself. 
\end{abstract}
\section{Introduction}
We study the bit complexity of finding approximate solutions to the following problems, where the input is assumed to be given {exactly} as a (complex) rational matrix or collection of matrices.

\begin{enumerate}
\item {\bf Jordan Normal Form.} Given $A\in \C^{n\times n}$, find a similarity $V\in\C^{n\times n}$ such
	that $A=VJV^{-1}$ where $J$ is a direct sum of {\em Jordan blocks}, i.e., matrices of type
	$$ J_\lambda:=\bm{ \lambda & 1 & 0 & 0 &\ldots \\ 0 & \lambda & 1 & 0 & \ldots\\
	0 & 0 & \lambda & 1 & \ldots\\ \vdots \\ 0 & 0 & 0 &\ldots & \lambda}$$
	for eigenvalues $\lambda\in \C$ of $A$.
	 Here $J$
		is unique up to permutations but $V$ is not if there are
		eigenspaces of dimension greater than one. The existence of the JNF is taught in
		undergraduate linear algebra courses and has myriad
		applications throughout science and mathematics.
\item {\bf Spectral Factorization.} Given an $n\times n$ monic matrix polynomial
	$$P(x)=x^{2d}I+\sum_{i\le 2d-1} x^i P_i$$ with Hermitian coefficients
		$P_i\in\C^{n\times n}$ satisfying $P(x)\succeq 0$ for all $x\in\R$, find a monic
		matrix polynomial $Q(x)=x^dI+\sum_{i\le d-1}x^iQ_i$ such that
		$P(x)=Q^*(x)Q(x)$ and $\det(Q(x))$ has all of its zeros in the
		closed upper half complex plane (where $Q^*(x)=x^dI+\sum_{i\le d-1} x^{i}Q_i^*$). Such a $Q(x)$ is guaranteed to exist and is unique \cite{rosenblatt1958multi, yakubovich1970factorization}.
		This fact has been rediscovered several times and goes under many names
		(such as matrix F\'ejer-Riesz/Wiener-Hopf factorization and matrix polynomial sum of squares) in different fields.
		 Note that the $n=1$ case is the fact that a univariate  scalar polynomial nonnegative on $\R$ may be expressed as a sum of squares (which can
		be obtained by considering the real and imaginary parts of $Q(x)$), and the $d=0$ case is just the Cholesky factorization if we allow the leading coefficient of $P(x)$ to be an arbitrary positive semidefinite matrix (not necessarily $I$).

	\end{enumerate}

Both of the above problems have generated a large literature and
several proposed methods for solving them (see Section \ref{sec:related} for a thorough
discussion).  Roughly speaking, these methods range on a spectrum between
symbolic (relying on algebraic reasoning, performing exact computations with
rational numbers, polynomials, field extensions, etc.) and numerical (relying on
analytic reasoning, semidefinite optimization, homotopy continuation, etc.). With one
exception in the case of problem (1) \cite{cai1994computing}, to the best of our knowledge none of
these methods has been rigorously shown to yield a polynomial time algorithm.

This paper provides the first polynomial time bit complexity bounds for problem (2) and
significantly improves the best known bound for (1), in the case when the input matrices have integer entries\footnote{Or are rational with a common denominator, see the corollaries following the main theorems.}. The algorithms we study
are simple and the algorithmic ingredients employed are not essentially new; rather, our main contribution is to synthesize ideas
from both the symbolic and numerical approaches to these problems, which have
in the past developed largely separately across different fields over several decades, in a way which enables good bit complexity estimates.
At a technical level, the main task is to find good bounds on both the bit lengths of
rational numbers and on the condition numbers of matrices appearing during the execution
of the algorithms. A key theme of our proofs is that bit length bounds can be used to
obtain condition number bounds and {\em vice versa}, and that carefully passing between the two
is more effective than either one alone. 

Our two main results, advertised in the abstract, appear in Sections \ref{sec:jnf} and \ref{sec:specfact} as Theorems \ref{thm:jnfmain} and \ref{thm:specfactmain}. Additional preliminaries for each result are included in its section, and further history and context for our contributions is discussed in Section \ref{sec:related}. Two notable common features of our results are:
\begin{itemize}
    \item Our algorithms have good {\em forward error} bounds, i.e., they compute approximations to the exact solution of the given instance (as opposed to backward error, computing exact solutions of nearby instances, which is the standard notion in scientific computing). This notion of error is appropriate for mathematical (as opposed to scientific) applications where discontinuous quantities in the input (such as the size of a Jordan block) can be meaningful, but typically comes at the cost of higher running times resulting from the use of numbers with large bit length.
    \item The running times of our algorithms are bounded solely in terms of the number of bits used to specify the input. This type of result is easier to use than bounds depending on difficult to compute condition numbers, especially for such ill-conditioned problems. As such, the key phenomenon enabling our results is that {\em instances of controlled bit length cannot be arbitrarily ill-conditioned} in an appropriate sense.
\end{itemize}

We conclude with a discussion and open problems in Section \ref{sec:discussion}.

\subsection{Comparison to Related Work}\label{sec:related}

\paragraph{Jordan Normal Form.} As far as we are aware, the only known polynomial bit complexity algorthm for approximately computing the JNF $A=VJV^{-1}$ of a general square rational matrix $A\in\Q^{n\times n}$ with $a-$bit entries is \cite{cai1994computing}, obtaining a runtime of $O(\poly(n,a))$ where the degree of the polynomial is not specified but is seen to be at least twelve\footnote{The related paper \cite{ar1994reliable} proposed using JNF as an ``uncheatable benchmark'' for certifying that a device has high computational power.}. 

In the symbolic computation community, the works \cite{kaltofen1986fast, ozello1987calcul, gil1992computation, giesbrecht1995nearly, roch1996fast,li1997determining} gave polynomial {\em arithmetic} complexity\footnote{The works \cite{ozello1987calcul, gil1992computation} derived bit complexity bounds for certain special cases of input matrices, but not in general.} bounds for computing the ``rational Jordan form'' of a matrix over any field. Roughly speaking, the rational Jordan form involves a symbolic representation of the matrix $J$ where the eigenvalues are represented in terms of their minimal polynomials over the field. These results are not adequate for our application to spectral factorization, which requires inverting submatrices of the similarity $V$, an operation which becomes difficult in the symbolic representation. Nonetheless our JNF algorithm is heavily inspired by the ideas in these works, relying on the same reduction to Frobenius canonical form (expressing $A$ as a direct sum of companion matrices) used in essentially all of them. The main difference is that we compute the eigenvalues approximately using numerical techniques \cite{pan2002univariate}, and are able to bound the condition number of $V$ by controlling the minimum gap between distinct eigenvalues as a function of the bit length of the input matrix.

Methods for computing the JNF must inherently involve a symbolic component since the Jordan structure can be changed by infinitesimal perturbations. It is worth mentioning that JNF is still not a solved problem ``in practice'' as trying to compute the JNF of a $50\times 50$ matrix using standard sofware packages reveals.
\paragraph{Spectral Factorization.} Polynomial spectral factorization has been rediscovered many times. The earliest references we are aware of are \cite{rosenblatt1958multi,helson1958prediction, yakubovich1970factorization,rosenblum1971factorization, choi1995sums}; the reader may consult any of the excellent surveys \cite{sayed2001survey,aylward2007explicit,dritschel2010operator,janashia2013matrix} for a detailed discussion of the history. More recently, several constructive proofs of the spectral factorization theorem have been proposed e.g. \cite{hardin2004matrix,aylward2007explicit,ephremidze2009simple,janashia2011new,ephremidze2014elementary,ephremidze2017algorithmization}, \cite[\S 2]{bakonyi2011matrix} (this list is not meant to be comprehensive). While these may be considered constructive from a mathematical standpoint, bit complexity bounds are not pursued and are not readily evident from the techniques used\footnote{This is due in each case to one or more of the following operations: solving a linear system without bounding its condition number, computing the eigenvalues of a matrix or roots of a univariate polynomial or system of multivariate polynomials ``exactly'' (which is impossible), solving a semidefinite program without controlling the volume of its feasible region, ``exactly'' computing a Schur or Jordan form of a matrix (also impossible), using an iterative scheme with no rigorous proof of convergence, and assuming arithmetic is carried out in infinite precision.}. Two particularly simple algorithms on this list are \cite{aylward2007explicit} (which requires exactly computing the Schur form of a certain matrix and inverting some of its submatrices) and \cite[\S 2]{bakonyi2011matrix} (which requires solving a semidefinite program).  We remark that unlike JNF, spectral factorization is actually a problem that is frequently solved in practice, with several of the papers above including numerical experiments.

The work most relevant for this paper is the important paper \cite{gohberg1980spectral} (see also \cite{langer1976factorization}), which  reduces spectral factorization to computing the JNF of a block companion matrix (see Section \ref{sec:specfact} for a definition), and inverting and multiplying some matrices derived from it. Our contribution is to analyze the conditioning of this approach and combine it with our JNF algorithm, yielding concrete bit complexity bounds.

One notable advantage of our algorithm is that it works even when the input is degenerate --- i.e., $P(x)$ is only positive semidefinite rather than positive definite --- which frequently occurs in applications. This is in contrast to almost all of the works mentioned above, which only consider the strictly positive definite case (or even require all roots of $\det(P(x))$ to be distinct) and appeal to nonconstructive limiting arguments to handle the degenerate case.

A more stringent variant of the problem is to find a real factorization $P(x)=Q^T(x)Q(x)$ in the case when $P(x)$ is real symmetric, possibly allowing $Q(x)$ to be rectangular. The recent works \cite{blekherman2019low,hanselka2019positive} have obtained optimal bounds on the dimensions of $Q(x)$. In this paper, we restrict our attention to the Hermitian setting.


\subsection{Preliminaries}\label{sec:prelim}
\renewcommand{\b}{\mathsf{dy}}
\newcommand{\round}{\mathsf{round}}
{\em Asmyptotic Notation.} We will use $\O(\cdot)$ to suppress polylogarithmic factors in the input parameters
$n$ (dimension), $a$ (bit length of input numbers), $d$ (degree), and  $b$ (desired bits of accuracy).  Logarithmic factors are not the focus of this paper and can be safely ignored everywhere because all proofs in this paper invoke this notation at most a constant number of times (in particular, our algorithms do not contain any loops which could lead to blowups in the exponents of the logarithms). \\

\noindent {\em Numbers and Arithmetic.} We say that $x\in \Z\bite{a}$ if $x$ is an integer with bit length at most $a$ and $x\in \Z\bit{a}$ if $x$ is an integer with bit length at most $\O(a)$.  We use   $\Q\bit{a/c}$ (resp. $\Q\bite{a/c}$) to denote the rationals $p/q$ with $p\in\Z\bit{a},q\in\Z\bit{c}$ (resp. $\Z\bite{a},\Z\bite{c}$), and $\Q_\b\bit{a/c}$ to denote the elements of $\Q\bit{a/c}$ with denominator equal to a power of two; the latter will sometimes be useful since adding rationals with dyadic denominators does not increase the bit length of the denominator. For a rational $x$, let $\round_c(x)$ denote the nearest rational with denominator $2^c$, which clearly satisfies
\begin{equation}\label{eqn:round}
|x-\round_c(x)|\le 2^{-c}\end{equation}
and can be computed in time nearly linear in the bit length of $x$.
This notation extends to complex numbers with rational real and imaginary parts in the natural way. The bit complexity of arithmetic with rational numbers is nearly linear in the bit length (see e.g. \cite{grotschel2012geometric}).\\

\newcommand{\K}{\mathbb{K}}
\noindent{\em Matrices.} We use $\Z^{n\times n}\bite{a}$ (resp. $\Z^{n\times n}\bit{a}$) to denote integer matrices with entries of bit length $a$ (resp. $\O(a)$), and $\Q^{n\times n}\bit{a/c}$ (similarly $\Q^{n\times n}_\b\bit{a/c}$, $\C^{n\times n}\bit{a/c}$, and $\C^{n\times n}_\b\bit{a/c}$) to denote matrices with entries in $\Q\bit{a/c}$ having a {\em common denominator}. For $A\in\Z^{n\times n}$, the notation $\size{A}$ refers to the maximum bit length of an entry of $A$.

We record the following easy facts about inverses as well as products and sums of pairs of matrices\footnote{We do not rely on matrix arithmetic with a superconstant number of matrices in this paper, for which the bit length bounds necessarily depend on the number of matrices.}, which follow from the adjugate formula for the inverse and the assumption on common denominators\footnote{Allowing distinct denominators in the entries of $A,B$ could increase the bit lengths of $AB$ and $A+B$ by a factor of $n$ if the denominators are, say, relatively prime}:
\begin{fact}[Bit Length of Matrix Arithmetic]\label{fact:arithmetic}
\begin{enumerate} 
    \item If $A\in\Z^{n\times n}\bit{a}$ then $A^{-1}\in \Q^{n\times n}\bit{an/an}$.
    \item If $A\in\K^{n\times n}\bit{a/c}$ then $A^{-1}\in \K^{n\times n}\bit{c+an/an}$, for $\K=\Q,\Q_\b,\C,\C_\b$.
    \item If $A,B\in\K^{n\times n}\bit{a/c}$ then
    $$A+B\in \K^{n\times n}\bit{a/c}\quad\textrm{and}\quad AB\in \K^{n\times n}\bit{a/c},$$
    for $\K=\Q,\Q_\b,\C,\C_\b$.
\end{enumerate}
            
\end{fact}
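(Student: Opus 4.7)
The plan is to reduce everything to the adjugate formula $A^{-1} = \adj(A)/\det(A)$ and routine bit length bookkeeping. For Part 1, given $A \in \Z^{n\times n}\bit{a}$, the Leibniz expansion writes $\det(A)$ as a signed sum of $n!$ products of $n$ entries; each product has bit length at most $na$, and summing $n!$ of them contributes an additional $\log(n!)=\O(n)$ bits, so $\det(A)\in\Z\bit{an}$. Each entry of $\adj(A)$ is a signed cofactor --- the determinant of an $(n-1)\times(n-1)$ submatrix --- and inherits the same bound. Taking $\det(A)$ as the common denominator then yields $A^{-1}\in\Q^{n\times n}\bit{an/an}$, proving Part 1.

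For Part 2, write $A = M/q$ where $q\in\Z\bit{c}$ is the shared denominator and $M$ is an integer matrix of bit length $\O(a)$. Then $A^{-1} = qM^{-1} = q\cdot\adj(M)/\det(M)$, so Part 1 applied to $M$ gives common denominator $\det(M)\in\Z\bit{an}$ and integer numerator entries $q\cdot\adj(M)_{ij}$ of bit length $\O(c+an)$, as claimed. The complex cases $\K=\C,\C_\b$ are handled by applying the same decomposition to real and imaginary parts separately, which only adds absorbed constants. In the dyadic cases $\K=\Q_\b,\C_\b$ the shared denominator $q=2^c$ is already a power of two and is carried into the numerator $q\cdot\adj(M)$, while $\det(M)$ serves as the common denominator of $A^{-1}$ (generally non-dyadic, but this is consistent with the stated bit-length bounds, and can be rounded to a nearby dyadic via (\ref{eqn:round}) when needed downstream).

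For Part 3, write $A = M_A/q_A$ and $B = M_B/q_B$ analogously. Then $A+B = (q_B M_A + q_A M_B)/(q_A q_B)$, with denominator in $\Z\bit{c}$ and integer numerator entries of bit length $\O(a+c)$, which collapses to the claimed $\bit{a/c}$ class under the convention that $a$ dominates $c$ (or after reducing to $\mathrm{lcm}(q_A,q_B)$). Similarly $AB = M_A M_B/(q_A q_B)$, where each entry of the integer matrix $M_A M_B$ is a sum of $n$ products of $\O(a)$-bit integers, hence of bit length $2\O(a)+\O(\log n)=\O(a)$. The main subtlety is simply that constants and logarithmic factors introduced by arithmetic (a ``$2$'' from the product of two bit-$a$ integers, a $\log n$ from summing $n$ terms, a doubling of the denominator's bit length) are absorbed by the $\O(\cdot)$ convention; the common-denominator hypothesis is essential to prevent per-entry denominators from compounding an extra factor of $n$ across the $n^2$ entries of the matrices. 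No ideas beyond the adjugate formula and elementary counting are required.
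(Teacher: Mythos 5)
Your proof is correct and follows exactly the route the paper indicates in its footnote: the adjugate formula plus direct bit-length bookkeeping, with the common-denominator hypothesis preventing per-entry denominator compounding. Two of your caveats are genuine imprecisions in the Fact as stated rather than gaps in your argument: for $\K=\Q_\b,\C_\b$ the exact inverse generally has non-dyadic denominator $\det(M)$ and must be rounded (as $\specfact$ indeed does in Step~3), and the $A+B$ bound requires $c=\O(a)$ or a shared denominator between $A$ and $B$, since otherwise cross-multiplication puts the numerator entries in $\Z\bit{a+c}$ (note that $AB$, by contrast, lands in $\bit{a/c}$ unconditionally). Your handling of the complex case is slightly underspecified --- ``apply the same decomposition to real and imaginary parts separately'' does not quite describe what is needed; the clean argument is to compute $\adj(M)/\det(M)$ over the Gaussian integers and then rationalize by multiplying numerator and denominator by $\overline{\det(M)}$, giving the rational common denominator $|\det(M)|^2\in\Z\bit{an}$ and numerator entries of bit length $\O(c+an)$ --- but the conclusion is the same.
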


\noindent {\em Perturbation Theory.} We use $\|\cdot\|$ to denote the operator norm and $\maxn{\cdot}$ to denote the entrywise $\ell_\infty$ norm of a matrix, noting that $\maxn{M}\le \|M\|\le n\maxn{M}$ for an $n\times n$ matrix $M$.
We use $\kappa(M):=\|M\|\|M^{-1}\|$ to denote the condition number of an invertible matrix. We will frequently use the elementary fact:
\begin{equation}\label{eqn:kappainv}
	\|(M+E)^{-1}-M^{-1}\| \le \frac{\|E\|\|M^{-1}\|}{1-\|E\|\|M^{-1}\|}\cdot \|M^{-1}\|
\end{equation}
provided $\|E\|\|M^{-1}\|<1$, which follows from a Neumann series argument, as well as its consequence
\begin{equation}\label{eqn:kappakappa}
	\kappa(M+E) \le \kappa(M)\frac{1+\|E\|\|M^-1\|}{1-\|E\|\|M^{-1}\|}
\end{equation}
whenever $\|E\|\|M^{-1}\|<1$. \\

\noindent {\em Polynomials.}  We use $\gap(\cdot)$ to indicate the minimum gap between {\em distinct} roots of a polynomial.  We use $\maxn{P(\cdot)-Q(\cdot)}$ to denote the coefficient-wise $\maxn{\cdot}$ norm of two matrix polynomials. We extend the notations $\bite{\cdot},\bit{\cdot}$ to polynomials by applying them to each scalar or matrix coefficient.\\

\noindent {\em Bit Length of Inverse and Characteristic Polynomial.} We will frequently appeal to the bounds
\begin{equation}\label{eqn:invbound}\|A^{-1}\|\le n!2^{an}\quad\textrm{whenever $A\in \Z^{n\times n}\bite{a}$ is invertible},\end{equation}
which is easily seen by considering the adjugate formula for the inverse, and
\begin{equation}\label{eqn:charbound}
\size{\chi_A(x)}\le n!2^{an}\quad\textrm{for $A\in\Z^{n\times n}\bite{a}$},
\end{equation}
where $\chi_A(x):=\det(xI-A)$ denotes the characteristic polynomial.\\

\section{Jordan Normal Form}\label{sec:jnf}
The {\em companion matrix} of a scalar monic polynomial $p(x) = x^d+\sum_{i<d} p_ix^i$ is the $d\times d$ matrix:
\begin{equation}\label{eqn:scalarcomp}
	C_p:=\bm{ 	0& 1	& 	& 	&	&	\\
			0&	0&	1&	&	&	\\
			&	&	\vdots 	&	&	\\
			&	&	&	&	&1	\\
			-p_0&	-p_1&	-p_2&	-p_3& \ldots& -p_{d-1}}^T
\end{equation}
It is easily seen that $\det(xI-C_p)=p(x)$. The high level idea of our algorithm is to use symbolic techniques to reduce the input matrix to a direct sum of companion matrices, and then use explicit formulas and root finding algorithms to compute the JNF of the companion matrices. We will rely on the following tools.

\begin{theorem}[Exact Frobenius Canonical Form, \cite{giesbrecht2002computing} {Theorems 2.2 \& 3.2}]\label{thm:storjohann} There is a randomized Las Vegas algorithm which given $A\in \Z^{n\times n}\bite{a}$, outputs a matrix $F\in \Z\bit{an}$ which is a direct sum of companion matrices and an invertible $U\in\Z\bit{an^2}$ satisfying $A=UFU^{-1}$, with an expected running time of $\O(n^5a+n^4a^2)$ bit operations.
\end{theorem}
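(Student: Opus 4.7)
The plan is to construct the Frobenius decomposition by iteratively building a cyclic basis for $\Z^n$ under $A$ whose blocks correspond to the invariant factors of $A$. First I would compute the minimal polynomial $m_A(x)$ by a randomized Krylov method: sample a random vector $v_1\in\Z^n$ with small integer entries, form the Krylov matrix $[v_1 \mid Av_1 \mid \cdots \mid A^n v_1]$, and read off the first linear dependence among its columns, which yields the minimal annihilating polynomial of $v_1$. A standard probabilistic argument (essentially Wiedemann's) shows that for generic $v_1$ this polynomial coincides with $m_A$, giving the Las Vegas randomization; if it is a proper divisor of $m_A$ (which one can detect by checking $m_A(A)=0$), one resamples. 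The cyclic subspace $K_1 := \mathrm{span}(v_1, Av_1, \ldots, A^{d_1-1} v_1)$ is $A$-invariant, and in the Krylov basis $A|_{K_1}$ is exactly the companion matrix $C_{m_A}$.

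To recover the remaining invariant factors I would recurse: pick a random $v_2$, compute its minimal annihilating polynomial modulo $K_1$ (which generically gives the next invariant factor $f_2$), extend the basis by $v_2, Av_2, \ldots$, and iterate. After at most $n$ rounds this produces all invariant factors $f_1, \ldots, f_k$, the block companion matrix $F = \bigoplus_i C_{f_i}$, and a transition matrix $U$ whose columns are the concatenation of the cyclic bases. Each round reduces to an exact rank computation and a univariate polynomial GCD over $\Q$, which are standard.

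For bit complexity: entries of $A^k v$ for $A\in\Z^{n\times n}\bit{a}$ have bit length at most $ka + k\log n$ by Hadamard's inequality, so every Krylov matrix lies in $\Z^{n\times(n+1)}\bit{an}$, and the coefficients of each $f_i$ (which divides $\det(xI-A)$) are again bounded by Hadamard on $xI-A$, placing $F$ in $\Z\bit{an}$. The columns of $U$ are at most $n$ stacked Krylov iterates whose quotient-lifted expressions introduce denominators of bit length $\O(an)$, bounded by Hadamard on the minors of the accumulated Krylov matrix; clearing these denominators yields $U\in\Z\bit{an^2}$. The running time follows because each round performs a Krylov construction and a fraction-free linear solve on integers of bit length $\O(an)$, which via Storjohann-type output-sensitive integer linear algebra costs $\O(n^4 a + n^3 a^2)$; summing over $O(n)$ rounds gives the claimed $\O(n^5 a + n^4 a^2)$.

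The main obstacle is the quotient step: projecting onto a complement of $K_1 + \cdots + K_{i-1}$ while keeping denominators bounded by $2^{\O(an)}$ requires carefully choosing basis representatives and controlling the minors that appear, since a naive implementation would blow up bit lengths geometrically across the $n$ rounds. Handling this correctly, together with the integer analogues of Fact \ref{fact:arithmetic} to propagate bit-length bounds through each arithmetic step, is what makes the $\O(an^2)$ bit bound on $U$ attainable and constitutes the technical heart of the algorithm.
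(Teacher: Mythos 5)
This statement is not proved in the paper at all: it is imported verbatim as a black-box citation of Giesbrecht--Storjohann \cite{giesbrecht2002computing} (their Theorems 2.2 and 3.2), so there is no in-paper proof to compare your argument against. Judged on its own terms, your sketch follows the right general strategy (randomized Krylov computation of the largest invariant factor, then recursion on the quotient module, which is indeed the spirit of the cited work and of its predecessors such as Giesbrecht's earlier algorithms), but it has two genuine gaps. First, a correctness gap: after computing the annihilator $f_2$ of $v_2$ modulo $K_1$, the cyclic subspace generated by $v_2$ itself need not intersect $K_1$ trivially --- $f_2(A)v_2$ is generally a nonzero element of $K_1$ --- so simply appending $v_2, Av_2, \ldots$ to the basis yields a block \emph{upper-triangular} matrix, not the direct sum $F=\bigoplus_i C_{f_i}$. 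One must replace $v_2$ by $v_2-w$ for a suitable $w\in K_1$ (this exists because $f_2$ divides $f_1$; it is the lifting step in the cyclic decomposition theorem), and this correction step itself involves solving linear systems whose bit lengths must be controlled.

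Second, and more importantly, the conclusion $U\in\Z\bit{an^2}$ and the running-time bound are asserted rather than proved. You correctly identify that a naive implementation of the quotient/projection steps blows up bit lengths geometrically over the $O(n)$ rounds, and then state that "handling this correctly \ldots constitutes the technical heart of the algorithm" --- but that heart is precisely what the theorem claims and what \cite{giesbrecht2002computing} actually establishes (via a two-stage reduction through an intermediate "zigzag" form with separately controlled transition matrices, rather than a direct round-by-round recursion). A Hadamard bound on minors of a single accumulated Krylov matrix does not by itself bound the entries of $U$, because $U$ is assembled from vectors produced by the correction steps across all rounds, each of which involves an inversion that can square denominators. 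So while your outline is a plausible reconstruction of the approach, it does not constitute a proof of the stated bit-length and complexity bounds; for the purposes of this paper the statement should simply be treated as an external result.
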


\begin{theorem}[Approximate Polynomial Roots in the Unit Disk, \cite{pan2002univariate} Corollary 2.1.2\footnote{The parameter $b'$ here corresponds to $b/n$ in \cite{pan2002univariate}}] \label{thm:pan} There is an algorithm which given bitwise access\footnote{i.e., the algorithm can query the $i$th bit of the binary expansion of each coefficient in constant time. This is slightly different from the access model in this paper where rational numbers are given as numerator and denominator, but it is easy to see that given a rational in $\Q\bite{a/c}$, the desired binary expansion needed to apply Theorem \ref{thm:pan} can be produced in $O^*(a+c)$ time.} to the coefficients of a polynomial $p\in\Q[x]$ of degree $n$ with all roots $z_1,\ldots,z_n\in\C$ satisfying $|z_i|\le 1$ and a parameter $b'\ge \log n$, computes  numbers $z_1',\ldots,z_n'\in \C_\b\bit{b'}$ such that $|z_i'-z_i|\le 2^{2-b'}$ for all $i\le n$, using at most $\O(n^2b')$ bit operations.\end{theorem}

\begin{theorem}[Minimum Gap of Integer Polynomials, \cite{mahler1964inequality}] \label{thm:mignotte} If $p\in \Z[x]\bite{a}$ is monic of degree $n$, then $$\gap(p)\ge 2^{-an-2n\lg n}.$$
\end{theorem}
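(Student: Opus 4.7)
The plan is the classical Mahler approach: bound the minimum gap for the squarefree part of $p$ using the integrality of its discriminant and a Hadamard-type bound on a modified Vandermonde determinant, and then pass back to $p$ using the multiplicativity of the Mahler measure. Let $\tilde p$ be the squarefree part of $p$, i.e., the product of the distinct irreducible $\Z[x]$-factors of $p$; then $\tilde p\mid p$ in $\Z[x]$, $\gap(\tilde p)=\gap(p)$, and $M(\tilde p)\le M(p)$ since Mahler measure is multiplicative and any nonzero integer polynomial cofactor has Mahler measure $\ge 1$. Let $\tilde p$ have simple roots $\alpha_1,\ldots,\alpha_{n'}$ with $n'\le n$ and leading coefficient $\tilde a$; its discriminant $D(\tilde p)=\tilde a^{2n'-2}\prod_{i<j}(\alpha_i-\alpha_j)^2$ is a symmetric polynomial in the roots with integer coefficients and therefore a nonzero integer, so the Vandermonde matrix $V$ with $V_{ij}=\alpha_i^{j-1}$ satisfies $|\det V|\ge|\tilde a|^{-(n'-1)}$.

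To upper bound $|\det V|$ in terms of $\gap(\tilde p)$, I would order the roots so that $\gap(\tilde p)=|\alpha_2-\alpha_1|$ with $|\alpha_2|\ge|\alpha_1|$, subtract row $1$ of $V$ from row $2$, and factor out $(\alpha_2-\alpha_1)$ using $\alpha_2^k-\alpha_1^k=(\alpha_2-\alpha_1)\sum_{j<k}\alpha_1^j\alpha_2^{k-1-j}$. The resulting matrix $V'$ satisfies $|\det V|=\gap(\tilde p)\cdot|\det V'|$ and has row-$2$ entries of magnitude at most $n\max(1,|\alpha_2|)^{n-2}$, so Hadamard's inequality yields
$$|\det V'|\;\le\;(n')^{(n'+2)/2}\bigl(M(\tilde p)/|\tilde a|\bigr)^{n'-1}\;\le\;(n')^{(n'+2)/2}\bigl(M(p)/|\tilde a|\bigr)^{n'-1},$$
where the last inequality uses $M(\tilde p)\le M(p)$. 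Dividing the two estimates gives $\gap(p)\ge (n')^{-(n'+2)/2}\,M(p)^{-(n'-1)}$, and combining with the standard inequalities $M(p)\le\|p\|_2\le\sqrt{n+1}\,\maxn{p}\le 2^{a+O(\log n)}$ and $n'\le n$ yields
$$\gap(p)\;\ge\;n^{-(n+2)/2}\cdot 2^{-a(n-1)-O(n\log n)}\;\ge\;2^{-an-2n\log n}.$$

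The main subtlety is the multiplicity reduction: bounding $\maxn{\tilde p}$ directly via Mignotte's factor-coefficient inequality costs an extra $2^n$ factor and yields only a weaker $2^{-an-O(n^2)}$ bound, so using the Mahler measure (which is multiplicative in factors, in contrast to the coefficient max-norm) is the essential trick that confines the correction to $O(n\log n)$. A closely related care is needed in the Hadamard bound on $V'$ itself, since a naive pairwise estimate $|\alpha_i-\alpha_j|\le 2\max(1,|\alpha_i|)\max(1,|\alpha_j|)$ applied to the $\binom{n}{2}-1$ remaining differences would also introduce an extraneous $O(n^2)$ term; packaging all root differences into a single $n\times n$ determinant and applying row-subtraction is what keeps this loss at $O(n\log n)$.
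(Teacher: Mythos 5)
The paper states Theorem~\ref{thm:mignotte} as a citation to Mahler (1964) and gives no proof of its own, so there is no in-paper argument to compare against; I can only assess your reconstruction on its merits, and it is correct. Your argument is essentially Mahler's: pass to the squarefree part $\tilde p$ using multiplicativity of the Mahler measure together with $M(q)\ge 1$ for nonzero integer $q$, so $M(\tilde p)\le M(p)\le \|p\|_2$; use $|D(\tilde p)|\ge 1$ to lower-bound the Vandermonde determinant by $|\tilde a|^{-(n'-1)}$; and then factor the closest pair out of a single row and apply Hadamard to the modified Vandermonde $V'$, giving $\gap(p)\ge (n')^{-(n'+2)/2}M(p)^{-(n'-1)}$, from which the stated bound follows via Landau's inequality and $n'\le n$. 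The two ``subtleties'' you identify are exactly the places a naive argument degrades: replacing $M(\tilde p)\le M(p)$ with a Mignotte-style coefficient bound on $\tilde p$ costs a spurious $2^{\Theta(n)}$, and bounding $\prod_{i<j}|\alpha_i-\alpha_j|$ pair by pair instead of via Hadamard on the determinant costs a spurious $2^{\Theta(n^2)}$; avoiding both is what confines the correction to $O(n\log n)$ and is the content of Mahler's improvement over cruder separation bounds. One small bookkeeping check worth doing explicitly (it works out): the row you modify has zero in its first column and entries bounded by $(n'-1)\max(1,|\alpha_2|)^{n'-2}$, so its contribution to the Hadamard product is $\sqrt{n'}(n'-1)\max(1,|\alpha_2|)^{-1}\cdot\max(1,|\alpha_2|)^{n'-1}$, and the extra $\max(1,|\alpha_2|)^{-1}\le 1$ is what lets you write the clean $(n')^{(n'+2)/2}\bigl(M(\tilde p)/|\tilde a|\bigr)^{n'-1}$; the final numerical step $(n')^{(n'+2)/2}(n+1)^{(n-1)/2}\le 2^{2n\log n}$ (for $n\ge 2$) is also fine and is all the paper needs.
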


\begin{corollary}[Approximate Roots and Multiplicities of Integer Polynomials]\label{cor:rootmult} There is an algorithm which given an integer polynomial $p\in \Z[x]\bite{a}$ of degree $n\ge 2$ with roots $z_1,\ldots,z_n\in\C$ and a parameter $b'\ge an+4n\lg n$, computes numbers $z_1',\ldots,z_n'\in \C_\b\bit{(a+b')/b'}$ such that $|z_i'-z_i|<2^{-b'}$ for all $i\le n$, using at most $\O(n^2(b'+a))$ bit operations. Each $z_i'$ appears a number of times exactly equal to the multiplicity of $z_i$ in $p(x)$.
\end{corollary}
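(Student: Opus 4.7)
The plan is to reduce to Theorem \ref{thm:pan} by rescaling the variable so that all roots lie in the closed unit disk, and then to enforce the multiplicity structure using Mahler's gap bound from Theorem \ref{thm:mignotte}.

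First, by Cauchy's root bound applied to the integer polynomial $p$, every root satisfies $|z_i| \leq R$ for some power of two $R = 2^{O(a)}$. Form $q(x) := p(Rx)$, an integer polynomial with coefficient bit length $\tilde{O}(an)$ whose roots $z_i/R$ lie in the closed unit disk, and invoke Theorem \ref{thm:pan} on $q$ with precision parameter $b'' := b' + a + c$ for a sufficiently large absolute constant $c$, obtaining $w_1,\ldots,w_n \in \C_\b\bit{b''}$ with $|w_i - z_i/R| \leq 2^{2-b''}$. Setting $\widetilde{z_i} := R w_i$ (a cheap operation since $R$ is a power of two) yields tentative approximations in $\C_\b\bit{(a+b')/b'}$---using $a \leq b'/n$ from the hypothesis to absorb $b''$ into $\tilde{O}(b')$---that satisfy $|\widetilde{z_i} - z_i| \leq R \cdot 2^{2-b''} \leq 2^{-b'-3}$.

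To enforce the multiplicity condition, I would cluster the $\widetilde{z_i}$ at threshold $\tau := 2^{-b'}$. Theorem \ref{thm:mignotte} gives $\gap(p) \geq 2^{-an - 2n\log n}$, and the hypothesis $b' \geq an + 4n\log n$ yields $\tau \leq 2^{-an - 4n\log n} \ll \gap(p)$. Consequently, approximations of a common true root lie within $2 \cdot 2^{-b'-3} < \tau$ of each other, whereas approximations of distinct true roots lie at distance at least $\gap(p) - 2 \cdot 2^{-b'-3} \gg \tau$; so clustering at $\tau$ recovers the partition of $\{1,\ldots,n\}$ induced by the true roots of $p$ exactly. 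Replacing every $\widetilde{z_i}$ in a given cluster by a common representative (say the lexicographically first one) produces the output $z_i'$; the representative lies within $2^{-b'-3} < 2^{-b'}$ of the associated true root, and by construction each distinct output value appears with multiplicity equal to that of its root in $p$.

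For the complexity, forming $q$ costs $\tilde{O}(an^2)$ (trivial since $R$ is a power of two); Theorem \ref{thm:pan} applied to $q$ with precision $b'' = O(b'+a)$ costs $\tilde{O}(n^2(b'+a))$; and the $O(n^2)$ pairwise comparisons in the clustering step each involve numbers of bit length $\tilde{O}(b'+a)$, for a total of $\tilde{O}(n^2(b'+a))$. The only point requiring care is the simultaneous choice of precision parameters: $b''$ must be large enough that Pan's error after rescaling by $R$ lies safely below the clustering threshold $\tau$, while $\tau$ itself must lie well below the true minimum gap. Both constraints are comfortably satisfied by the choice $b'' = b' + a + O(1)$ together with the hypothesis $b' \geq an + 4n\log n$, which is the one place the latter hypothesis is essential.
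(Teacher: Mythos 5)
Your proposal is correct and follows essentially the same route as the paper's proof: rescale $p$ so its roots lie in the unit disk (the paper uses the coefficient-sum bound $M=n2^a$ where you use a power-of-two Cauchy bound), apply Theorem \ref{thm:pan} with the precision parameter increased by roughly $\lg M$ to absorb the rescaling, and then use the Mahler gap bound of Theorem \ref{thm:mignotte} together with the hypothesis $b'\ge an+4n\log n$ to cluster the approximations and read off multiplicities. The differences are only cosmetic.
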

\begin{proof} The largest root of $p(x)$ has magnitude at most the sum of the absolute values of its coefficients, which is at most $M=n2^a$. Apply Theorem \ref{thm:pan} to the polynomial $p(Mx)$, which has roots in the unit disk, with error parameter $b'+\lg(M)=b'+\lg(n)+a+1$ to obtain numbers $\tilde{z_1},\ldots,\tilde{z_n}\in \C_\b\langle (a+b')/(a+b')\rangle=\C_\b\langle b'/b'\rangle$ (since $b'\ge a$) with common denominator. Then for $i=1,\ldots n$ we have $z_i':=\tilde{z_i} M \in \C\langle (a+b')/b'\rangle$ with common denominator and $|z_i'-z_i|\le 2^{-b'}$. By Theorem \ref{thm:mignotte} the minimum gap
		between distinct $z_i$ is at least $2^{-b'+1}$ since $2n\lg n \ge 1$ so this is sufficient
		to correctly determine the multiplicity of each $z_i$ and replace all $z_i'$ corresponding to a root with the same value.
\end{proof}

We will also use an explicit formula for the JNF of a companion matrix as a confluent Vandermonde matrix (see e.g. \cite{gautschi1962inverses, batenkov2012norm} for a discussion) in the roots of the corresponding polynomial.
\begin{theorem}[\cite{brand1964companion}] \label{thm:brand} If $C\in\C^{n\times n}$ is a companion matrix with distinct eigenvalues $\lambda_1,\ldots \lambda_k\in\C$ of multiplicities $m_1,\ldots,m_k$, then $C=WJW^{-1}$ with
	$$J = \oplus_{j\le k} J_{\lambda_j}$$
	$$W = [W_{\lambda_1}, W_{\lambda_2},\ldots, W_{\lambda_k}]$$
	where $J_{\lambda_j}$ an $m_j\times m_j$ Jordan block with eigenvalue $\lambda_j$ and $W_{\lambda_j}$ is the $n\times m_j$ matrix:
	\begin{equation}\label{eqn:brand} W_{\lambda_j}:=\bm{	1 & 0 & 0 & 				\ldots & 0\\
				\lambda_j & 1 & 0 & 			\ldots & 0\\
				\lambda_j^2 & 2\lambda_j & 1 &		\ldots & 0\\
				\lambda_j^3 & 3\lambda_j^2 & \binom{3}{2}\lambda_j	&	\ldots & 0\\
				& & \vdots & &\\
				\lambda_j^{n-1} & (n-1)\lambda_j^{n-2} & \binom{n-1}{2}\lambda_j^{n-3} & \ldots & \binom{n-1}{m_j}\lambda_j^{n-m_j}},
		\end{equation}
	so that $W$ is a confluent Vandermonde matrix. 
\end{theorem}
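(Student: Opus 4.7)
The plan is to verify the Jordan-chain identity $C_p W_{\lambda_j} = W_{\lambda_j} J_{\lambda_j}$ separately for each distinct eigenvalue $\lambda_j$, assemble these into $C_p W = W J$, and conclude by showing $W$ is nonsingular so that $C_p = W J W^{-1}$.

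The argument hinges on a single polynomial identity. Setting $v(\lambda) := (1, \lambda, \lambda^2, \ldots, \lambda^{n-1})^T$, I would first verify by direct entrywise computation --- using the companion layout and monicity of $p$ --- that
$$
(C_p - \lambda I)\, v(\lambda) \;=\; -\, p(\lambda)\, e_n
$$
holds as an identity for all $\lambda \in \C$, not merely at the roots of $p$. Differentiating $k-1$ times in $\lambda$ and applying the Leibniz rule to the product on the left yields
$$
(C_p - \lambda I)\, v^{(k-1)}(\lambda) \;-\; (k-1)\, v^{(k-2)}(\lambda) \;=\; -\, p^{(k-1)}(\lambda)\, e_n.
$$
Rescaling by $1/(k-1)!$ and setting $w_k := \frac{1}{(k-1)!}\, v^{(k-1)}(\lambda_j)$, the elementary identity $\frac{1}{(k-1)!}\frac{d^{k-1}}{d\lambda^{k-1}}\lambda^i = \binom{i}{k-1}\lambda^{i-k+1}$ matches $w_k$ exactly with the $k$-th column of $W_{\lambda_j}$ in (\ref{eqn:brand}). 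Because $\lambda_j$ is a root of $p$ of multiplicity $m_j$, the derivatives $p^{(k-1)}(\lambda_j)$ vanish for $1 \le k \le m_j$, so after evaluating at $\lambda_j$ the equation collapses to $C_p w_k = \lambda_j w_k + w_{k-1}$ (with the convention $w_0 := 0$), which is exactly the Jordan-chain relation $C_p W_{\lambda_j} = W_{\lambda_j} J_{\lambda_j}$.

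Finally, for invertibility of $W$ I would invoke the standard fact that a confluent Vandermonde matrix on distinct nodes with multiplicities summing to $n$ is nonsingular: any $c \in \ker W$ produces a polynomial $f(x) := c^T v(x)$ of degree $< n$ vanishing at each $\lambda_j$ to order at least $m_j$, and $\sum_j m_j = n$ forces $f \equiv 0$ and hence $c = 0$. The only genuine computation is the base identity $(C_p - \lambda I) v(\lambda) = -p(\lambda)\, e_n$; everything else is mechanical differentiation and bookkeeping. The main obstacle is purely notational --- carefully matching the factorials produced by Leibniz against the binomial coefficients in (\ref{eqn:brand}), and handling the $k=1$ boundary case where the $v^{(-1)}$ term is absent.
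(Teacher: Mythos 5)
The paper does not prove this statement---it is imported from \cite{brand1964companion} as a black box---so there is no internal proof to compare against. Your argument is correct and is essentially the classical derivation (indeed essentially Brand's own): the identity $(C-\lambda I)v(\lambda)=-p(\lambda)e_n$ for $v(\lambda)=(1,\lambda,\dots,\lambda^{n-1})^T$, differentiated $k-1$ times via Leibniz and normalized by $1/(k-1)!$, yields exactly the chain relations $Cw_k=\lambda_j w_k+w_{k-1}$ for $k\le m_j$ once the derivatives $p^{(k-1)}(\lambda_j)$ vanish, and the Hermite-interpolation count gives nonsingularity of $W$. Two small points to tighten. First, the invertibility step should be phrased for the left kernel: it is $c^TW=0$ (equivalently $W^Tc=0$), not $Wc=0$, that says $\frac{1}{(k-1)!}f^{(k-1)}(\lambda_j)=0$ for $f(x)=c^Tv(x)$ and all $k\le m_j$; since $W$ is square this is of course equivalent to what you want, but as written the deduction from ``$c\in\ker W$'' to ``$f$ vanishes to order $m_j$'' does not parse. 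Second, a convention caveat worth stating explicitly: your base identity holds for the companion matrix with $1$'s on the superdiagonal and the $-p_i$'s in the last row, i.e.\ the matrix displayed in \eqref{eqn:scalarcomp} \emph{before} the transpose is applied, and that is the orientation for which $C=WJW^{-1}$ with $W$ confluent Vandermonde is true (the transposed ``column'' form has Horner-type, not monomial, generalized eigenvectors). The paper's $C_p$ carries a transpose, so you should say which orientation you are proving the theorem for; this is an ambiguity inherited from the paper rather than a flaw in your argument. (Relatedly, the bottom-right entry of \eqref{eqn:brand} should read $\binom{n-1}{m_j-1}\lambda_j^{n-m_j}$ to match the pattern $\binom{i}{k-1}\lambda_j^{i-k+1}$ that your normalization produces.)
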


Note that the entries of $W_\lambda$ in \eqref{eqn:brand} are univariate polynomials of degree $n$ in the $\lambda_i$ with coefficients in $\Z\bite{n}$. We now present the algorithm.

\begin{figure}[ht]
\begin{boxedminipage}{\textwidth}
Algorithm $\jnf$.\\
Input: $A\in \Z^{n\times n}\bite{a}$, desired bits of accuracy $b$.\\
Output: $\ax{J}, \ax{V}\in \C_\b^{n\times n}\bit{an^3+b/(an^3+b)}.$\\
Guarantee: $\|J-\ax{J}\|\le 2^{-b}\|J\|, \|V-\ax{V}\|\le 2^{-b}\|V\|$ for some exact JNF $A=VJV^{-1}$ and $\kappa(\ax{V})\le 2^{\O(an^3)}$.

\begin{enumerate}
	\item Exactly compute the Frobenius Normal Form $A=UFU^{-1}$ with
		$F\in\Z^{n\times n}\bit{an}$ and $U\in\Z^{n\times
		n}\bit{an^2}$ using Theorem \ref{thm:storjohann}. Let
		$F=\oplus_{i\le \ell} C_i$ for companion matrices
		$C_i\in\Z^{n_i\times n_i}\bit{an}$. Let 
		$ c:=\max_i \size{C_i}.$

	\item For $i=1\ldots \ell$, apply Corollary \ref{cor:rootmult} to the
		characteristic polynomial $\chi_{C_i}(x)\in \Z[x]\bit{an}$ 
		with accuracy \begin{equation}\label{eqn:bset} b':=b+(n+1)\size{U}+cn^2+an^2+4n^2\lg n+ 7an+3\lg n\end{equation} to obtain
		approximations $\ax{\lambda_{i1}},\ldots,
		\ax{\lambda_{ik_i}}\in\C_\b\bit{b'/b'}$ to the distinct
		eigenvalues $\lambda_{i1},\ldots,\lambda_{ik_i}$ of $C_i$, with
		error $|\lambda_{ij}-\ax{\lambda_{ij}}|\le 2^{-b'}$, as
		well as their multiplicities.

	\item For $i=1,\ldots,\ell$, compute  approximate eigenvalue powers $\ax{\lambda_{i1}^p},\ldots,\ax{\lambda_{ik_{i}}^p}\in\C_\b\bit{b'/b'}$ using Lemma \ref{lem:approxpowers}. Let 
		$$\ax{J_i}:=\oplus_{j\le k_i} J_{\ax{\lambda_{ij}}}\in\C_\b^{n_i\times n_i}\bit{b'/b'}\qquad \textrm{and } \qquad \ax{W_i}:=[W_{\ax{\lambda_{i1}}}, \ldots, W_{\ax{\lambda_{ik_i}}}]\in\C_\b^{n_i\times n_i}\bit{b'/b'}$$
		as in \eqref{eqn:brand}, i.e., substitute the approximate powers $\ax{\lambda_{ij}^p}$ into the appropriate polynomials $J_\lambda, W_{\lambda}$.
	\item Output $\ax{J}$ and $\ax{V}=U\ax{W}$.
\end{enumerate}
\end{boxedminipage}
\end{figure}

We begin by fully defining and analyzing Step 3 of $\jnf$.
\begin{lemma}[Rounded Approximate Eigenvalue Powers]\label{lem:approxpowers} The approximate powers of the eigenvalues $\ax{\lambda_{ij}^p}\in \C_\b\bit{b'/b'}$ required in Step 3 may be computed in $\O(n^2b')$ bit operations and satisfy
$$ |\ax{\lambda_{ij}^p} - {\lambda_{ij}}^p|\le 2^{-b'}\cdot n 2^{2n+1}\|A\|^{n^2+n}\le 2^{-b'+an^2+5an}$$
for every $i,j$.
\end{lemma}
\begin{proof}

 Suppose we wish to compute powers $\lambda,\lambda^2,\ldots,\lambda^r$ for some nonzero eigenvalue $\lambda=\lambda_{ij}$ appearing in $W$. Let $\ax{\lambda}$ be the approximate eigenvalue produced in Step 2, satisfying $|\lambda-\ax{\lambda}|\le 2^{-b'}$.	We use the following inductive scheme for $p=2,\ldots,r\le n$:

	$$ \ax{\lambda^p} := \round_{b'}(\ax{\lambda^{p-1}}\cdot \ax{\lambda}).$$

First, observe that from Step 2 we have the error estimate $|\lambda-\ax{\lambda}|\le 2^{-b'}$, which implies that for every $p\le n$:
\begin{equation}\label{eqn:round1}
    |\lambda^p-(\ax{\lambda})^p|\le 2^{-b'}\cdot p\cdot|\lambda^{p-1}|\le 2^{-b'}n\|A\|^n\le 2^{-b'}\cdot n 2^{2n}\|A\|^{n^2+n}
\end{equation} 
since $|\lambda|\le \|A\|$ and $\|A\|\ge 1$. 
Thus, it suffices to show that for each $p$:
\begin{equation}\label{eqn:round2} |\ax{\lambda^p}-(\ax{\lambda})^p|\le 2^{-b'}\cdot n 2^{2n}\|A\|^{n^2+n}.\end{equation}
Notice that $|\lambda|\ge \|A\|^{-n}$ since the product of the nonzero eigenvalues of $A$ is given by $e_k(A)\ge 1$, for $e_k$ the last nonzero elementary symmetric function of $A$, and each eigenvalue of $A$ is at most $\|A\|$. Since 
\begin{equation}
    2^{-b'}\le 2^{-an-2\lg n-1}\le \|A\|^{-n}/2,
\end{equation} we have $|\ax{\lambda}|\ge \|A\|^{-n}/2$ and thereby $|(\ax{\lambda})^p|\ge \|A\|^{-n^2}/2^n$ for every $p=1,\ldots n$. It now follows by induction that:
	$$|\round_{b'}(\ax{\lambda^{p-1}}\cdot \ax{\lambda})-(\ax{\lambda})^p)|\le 2^{-b'}p\cdot 2^{n}\|A\|^{n^2}|(\ax{\lambda})^p|$$
	for every $p=2,\ldots,r$, i.e., where the inductive hypothesis is that in each step the rounding incurs a {\em relative} error of at most $2^{-b'}2^n\|A\|^{n^2}$, and we observe that the relative errors simply add up since they are sufficiently smaller than one. Since we also have the upperbound $|(\ax{\lambda})^p|\le 2^n\|A\|^n$, the desired inequality
	 \eqref{eqn:round2} follows. Combining this with \eqref{eqn:round1} yields the advertised error bound.
	
The total bit complexity for one eigenvalue is $n$ times the cost of one step of the induction, which is $\O(nb')$. Since there are $n$ eigenvalues, the total cost is $\O(n^2b')$. 
\end{proof}

The key condition number bounds used in proving correctness of $\jnf$ are the following, obtained via the minimum eigenvalue gap of $W$ which is controlled using the maximum bit length of the $C_i$. Item (iii) is also used in the analysis of the spectral factorization algorithm in Section 3.
\begin{lemma}[Condition Numbers from Gaps]\label{lem:jnfcond} If $A\in\Z^{n\times n}\bite{a}$ and $A=UFU^{-1}=(UW)J(UW)^{-1}=VJV^{-1}$ for exact Frobenius and Jordan forms as above, then
\begin{enumerate}
\item [(i)] $\kappa(U)\le n^2\cdot n!\cdot 2^{n\size{U}+\size{U}}\le 2^{\O(an^3)}.$
\item [(ii)] $\kappa(W)\le 2^{\O(an^3)}.$
\item [(iii)] $\|V\|\le n2^{\bite{U}}\cdot 2^{n+an+n\lg n}\le 2^{O^*(an^2)}\quad\textrm{and }\|V^{-1}\|\le n\cdot (n!)^22^{n\bite{U}+cn^2+2n^2\lg n}\le  2^{\O(an^3)}.$
\end{enumerate}
\end{lemma}
\begin{proof} For (i), note that $\|U\|\le n2^{\size{U}}$ and since $U\in\Z^{n\times n}\bit{an^2}$, we have $$\|U^{-1}\|\le n\cdot n!2^{n\bite{U}}\le 2^{\O(an^3)}.$$ Consequently $\|U\|\|U^{-1}\|\le 2^{\O(an^3)}$. 

The matrix $W$ is a direct sum of $W_i$, which are  confluent Vandermonde matrices in the eigenvalues $\lambda_{ij}$, which are roots of the $\chi_{C_i}(x)$. By Theorem \ref{thm:mignotte} and  $\size{\chi_{C_i}}=\size{C_i}$, we have
$$ \delta:=\min_i \left[\gap(\chi_{C_i})\right]\ge 2^{-\max_i\size{\chi_{C_i}}n-2n\lg n} = 2^{-cn-2n\lg n}\ge 2^{-\O(an^2)},$$
where the last inequality uses $c=\O(an)$.
 Then  \cite[Theorem 1]{batenkov2012norm} implies that $$\|W^{-1}\|\le n!(1/\delta)^n\le n!2^{cn^2+2n^2\lg n}\le  2^{\O(an^3)}.$$ On the other hand, the formula \eqref{eqn:brand} reveals that $\|W\|\le n\cdot 2^{n+an+n\lg n}$ since $|\lambda_{ij}|\le n2^a$. Multiplying these two bounds yields (ii). 
 
 Finally, we have $\kappa(V)\le \kappa(U)\kappa(W)$ and $\|V^{-1}\|\le \|W^{-1}\|\|U^{-1}\|$, establishing (iii).
\end{proof}

\begin{theorem}\label{thm:jnfmain} The algorithm $\jnf$ satisfies its guarantees and runs in expected $\O(n^{\omega+3}a+n^4a^2+n^\omega b)$ bit operations.
\end{theorem}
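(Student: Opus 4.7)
The plan is to verify the four parts of the guarantee separately---running time, output bit length, forward error, and condition number---with the choice $b' = b + \O(an^3)$ from \eqref{eqn:bset} as the key parameter tying everything together.

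For the running time I would sum the four steps. Step 1 costs $\O(n^5a + n^4a^2)$ by Theorem \ref{thm:storjohann}. Step 2 applied to each block $C_i \in \Z^{n_i\times n_i}\bit{an}$ costs $\O(n_i^2(b'+an))$ by Corollary \ref{cor:rootmult} (the hypothesis $b' \ge an\cdot n_i + 4n_i\log n_i$ is easily met); summing using $\sum_i n_i^2 \le n^2$ gives $\O(n^2(b'+an)) = \O(n^2 b + an^5)$. Step 3 amounts to $\O(n^2)$ multiplications of $\O(b')$-bit dyadic numbers, hence $\O(n^2 b')$. The dominant cost is Step 4: the product $U\ax{W}$ of $U \in \Z^{n\times n}\bit{an^2}$ with $\ax{W}\in \C_\b^{n\times n}\bit{b'/b'}$ uses $\O(n^\omega)$ arithmetic operations on numbers of bit length $\O(an^2 + b')$, totalling $\O(n^\omega(an^2 + b')) = \O(n^{\omega+3}a + n^\omega b)$. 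Adding and substituting $b' = b + \O(an^3)$ proves the time bound. The output bit length follows by tracking common denominators: entries of $U\ax{W}$ are sums of $n$ products from $\Q_\b\bit{(an^2+b')/b'}$, so $\ax{V}\in \C_\b^{n\times n}\bit{(an^3+b)/(an^3+b)}$, and $\ax{J}$ clearly lies in the same class.

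The forward error analysis for $\ax{V}$ is the technical heart. Writing $\ax{V} - V = U(\ax{W} - W)$, I would bound $\|\ax{W} - W\|$ entry by entry. Each entry of $W - \ax{W}$ is a small binomial coefficient times $\lambda^p - \ax{\lambda}^p$ for some $p \le n$; using $|\lambda| \le n2^a$ and $|\lambda - \ax{\lambda}|\le 2^{-b'}$, the elementary identity $|\lambda^p - \ax{\lambda}^p| \le p|\lambda|^{p-1}|\lambda - \ax{\lambda}|$ together with the rounding error from Step 3 yields $\|W - \ax{W}\| \le 2^{-b' + \O(an)}$. Multiplying by $\|U\| \le 2^{\O(an^2)}$ then gives $\|\ax{V} - V\| \le 2^{-b' + \O(an^2)}$. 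Since Lemma \ref{lem:jnfcond} supplies $\|V\| \ge 2^{-\O(an^3)}$, the choice $b' = b + \O(an^3)$ with a sufficiently large implicit constant delivers $\|\ax{V} - V\| \le 2^{-b}\|V\|$. For $\ax{J}$ the bound is easier: $\|\ax{J} - J\|_{\max} \le 2^{-b'}$ directly from Corollary \ref{cor:rootmult}, while $\|J\| \ge 1$ whenever $A \ne 0$---either the characteristic polynomial of $A$, after factoring out its largest $x^k$ divisor, is a monic integer polynomial with nonzero constant term and therefore has a root of magnitude $\ge 1$, or $A$ is a nonzero nilpotent and $J$ contains a Jordan block of size $\ge 2$ whose superdiagonal $1$ forces $\|J\|\ge 1$; the degenerate case $A = 0$ is handled trivially by setting $\ax{J} = 0$. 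Finally, for the condition number, $\|\ax{V} - V\|\,\|V^{-1}\| \le 2^{-b}\kappa(V) \le 2^{-b+\O(an^3)}$, which is $\le 1/2$ once the constant in $b' = b + \O(an^3)$ is chosen large enough, so \eqref{eqn:kappakappa} yields $\kappa(\ax{V}) \le 3\kappa(V) \le 2^{\O(an^3)}$.

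The hard part is tracking the three-level amplification of the root-approximation error $2^{-b'}$: raising to powers up to $n$ costs a factor $2^{\O(an)}$, left-multiplying by $U$ costs an additional $2^{\O(an^2)}$, and the resulting error must be compared against the tiny lower bound $\|V\| \ge 2^{-\O(an^3)}$ supplied by the confluent Vandermonde structure of $W$ together with Mahler's minimum-gap estimate (Theorem \ref{thm:mignotte}). It is precisely the product of these three factors that forces $b' = b + \O(an^3)$, which in turn dictates the $\O(n^{\omega+3}a)$ term in the final running time.
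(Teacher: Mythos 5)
Your proposal is correct and follows essentially the same route as the paper: the same four-way split (bit length, forward error, conditioning, running time), the same reliance on the confluent Vandermonde formula plus Mahler's gap bound to lower-bound $\|V\|$ and control $\kappa(V)$, and the same accounting that forces $b'=b+\O(an^3)$ and hence the $\O(n^{\omega+3}a)$ term. The only cosmetic differences are that you inline the error propagation that the paper delegates to Lemma \ref{lem:approxpowers}, and you lower-bound $\|J\|$ by $1$ via the constant term of the reduced characteristic polynomial where the paper settles for $2^{-\O(an)}$; both suffice.
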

\begin{proof}

{\em Bit Length of the Output.} The bit length assertions in Steps 1 and 2 are immediate from Theorem \ref{thm:storjohann} and Corollary \ref{cor:rootmult}. The bit length of $\ax{J},\ax{W}$ in Step 3 is guaranteed by Lemma \ref{lem:approxpowers}. The bit length of the product in Step 4 is implied by Fact \ref{fact:arithmetic}.\\
\noindent {\em Error Bounds.} Step 1 is exact. 

Lemma \ref{lem:approxpowers} implies that the matrices 
 $\ax{J_i},\ax{W_i}$ in Step 3 satisfy
\begin{equation}\label{eqn:japprox1}
    \maxn{J_i-\ax{J_i}}\le 2^{-b'},\qquad \maxn{W_i-\ax{W_i}}\le 2^{-b'+n+an^2+5an}
\end{equation} 
This additive bound is preserved under taking direct sums. To obtain the multiplicative bound, we observe that $\norm{W}\ge 1$ by \eqref{eqn:brand}; if there is a Jordan block of size at least two then $\norm{J}\ge 1$ also, otherwise since $A$ is integral we have 
$$\prod_{\textrm{nonzero} \lambda_{ij}}\lambda_{ij}^{\mathrm{mult}(\lambda_{ij})}=e_k(A)=e_k(J)\ge 1$$
for the last nonzero elementary symmetric function $e_k$ of $A$, so one of the eigenvalues $\lambda_{ij}$ must be at least $\|A\|^{-n}$ and we have crudely $\|J\|\ge \|A\|^{-n}\ge 2^{-2an}$. In either case, we conclude after passing to the operator norm that
$$\norm{J-\ax{J}}\le  2^{-b'+2an}\norm{J}\le 2^{-b}\|J\|\quad\textrm{ and }\quad 
	\norm{W-\ax{W}}\le 2^{-b'+an^2+6an}\norm{W}.$$
	
To obtain the final error bound on $\ax{V}$ in Step 4, we observe that
\begin{equation}\label{eqn:vapprox1}
    \|V-\ax{V}\|=\|UW-U\ax{W}\|\le \|U\|n2^{-b'+n+an^2+5an}\le 2^{-b'+an^2+6an+\size{U}+2\lg n}
\end{equation}
since $\|U\|\le 2^{\size{U}+\lg n}$. Since $\|V\|\ge \|W\|/\|U^{-1}\|\ge 2^{-n\size{U}-n\lg n-\lg n}$, we obtain the conclusion
\begin{equation}\label{eqn:vapprox}
\|V-\ax{V}\|\le 2^{-b'+an^2+6an+\size{U}+3\lg n+n\size{U}+n\log n}\|V\|\le 2^{-b}\|V\|
\end{equation}
by our choice of $b'$, as desired.\\

\noindent {\em Condition of $\ax{V}$.}
	The bound \eqref{eqn:vapprox1} together with Lemma \ref{lem:jnfcond}(iii) implies $$\|V-\ax{V}\|\|V^{-1}\|\le  2^{-b'+an^2+6an+\size{U}+2\lg n} \cdot  n\cdot (n!)^22^{n\bite{U}+cn^2+2n^2\lg n}\le 1/2$$
	by the choice of $b'$ in Step 2. It follows from 
	\eqref{eqn:kappakappa} that
	\begin{equation}\label{eqn:kappavax}\kappa(\ax{V})\le \kappa(V)\frac{1+1/2}{1-1/2}\le 3\kappa(V)\le 2^{\O(an^3)},\end{equation}
		as desired.\\

\noindent {\em Complexity.}
	Step 1 takes $\O(n^5a+n^4a^2)$ bit operations by Theorem \ref{thm:storjohann}.  
	
	Step 2 takes $\O(n^2(an^3+b))$ bit operations by Theorem \ref{thm:pan}.
	
	Step 3 takes $\O(an^5+bn^2)$ bit operations by Lemma \ref{lem:approxpowers}.
	
	The matrix multiplication in Step 4 $\O(n^\omega (b'+an^2))$ time.
	
	The total running time is therefore $\O(n^\omega b'+n^4a^2)=\O(n^{\omega+3}a+n^{4}a^2+n^{\omega}b)$, as advertised.
	
\end{proof}

\begin{corollary}[JNF of Rational Matrices with Common Denominator]\label{rem:commonjnf} The algorithm $\jnf$ can easily be used to compute the JNF of $A/q$ for integer $A$ and $q$: if $\ax{J},\ax{V}$ is an approximate JNF of $A$ with $b$ bits of accuracy, then $\ax{J}/q,\ax{V}$ is an approximate JNF of $A/q$ with $b-\lg(q)$ bits of accuracy. This fact will be useful in the spectral factorization algorithm in the following section.

\begin{remark}
The proof of Theorem \ref{thm:jnfmain} yields an explicit estimate on $\kappa(V)$ in terms of the bit lengths $a,c, \size{U}$ which may be better than the worst case bound of $2^{\O(an^3)}$ for specific instances.
\end{remark}
\end{corollary}

\section{Spectral Factorization}\label{sec:specfact}
We briefly review some aspects of the theory of matrix polynomials (the reader may consult \cite{gohberg2005matrix} for a comprehensive introduction). Given a
monic matrix polynomial $L(x)=x^dI+\sum_{i\le d-1}x^iL_i$ with
$L_i\in\C^{n\times n}$, its adjoint is $L^*(x)=x^dI+\sum_{i\le d-1}x^iL_i^*$,
and its latent roots are the $\lambda\in\C$ such that $L(\lambda)$ is singular. 
The {\em block companion matrix}\footnote{This is a ``row'' companion matrix as opposed to the ``column'' companion matrices
in Section \ref{sec:jnf}. This is customary in the theory of matrix polynomials.} of $L$ is the $dn\times dn$ matrix:
\begin{equation}\label{eqn:blockcomp}
	C_L:=\bm{ 	0& 1	& 	& 	&	&	\\
			0&	0&	1&	&	&	\\
			&	&	\vdots 	&	&	\\
			&	&	&	&	&1	\\
			-L_0&	-L_1&	-L_2&	-L_3& \ldots& -L_{d-1}}
\end{equation} The following important theorem states the existence of spectral factorizations of positive definite monic matrix polynomials,
and gives a way of computing them using the block companion matrix.

\begin{theorem}[Theorems 5.1, 5.4 of \cite{gohberg1980spectral}] \label{thm:sfglr1} Suppose $P(x)=P^*(x)=x^{2d}I+\sum_{i\le 2d-1} x^{i}P_i\in\C^{n\times n}[x]$ monic of degree $2d$ satisfies $P(x)\succeq 0$ for all $x\in\R$. Then:
	\begin{enumerate}
	\item There is a unique monic $Q(x)\in\C^{n\times n}[x]$ of degree $d$ such that $P(x)=Q^*(x)Q(x)$ and $Q$ has all of its latent roots in the closed upper half plane. 
	\item The complex eigenvalues of $C_P$ occur in conjugate pairs, and each Jordan block in the JNF of $C_P$ corresponding to a real eigenvalue has even size. 

	\item Let $C_P=VJV^{-1}$ be a Jordan Form of the block companion matrix of $P$ with block decomposition
		$$ J=:\bm{ J_+ & &\\ &   J_0 & \\ & & J_-},\qquad V=:\bm{V_+ & V_0 & V_-\\ Z_+ & Z_0 & Z_-}$$
			for $J_{\pm}$ corresponding to eigenvalues in the open upper/lower half plane and $J_0$ corresponding to the real eigenvalues and $V_\pm, V_0$ having $dn$ rows.
			Then 
			\begin{equation}\label{eqn:cqformula}C_Q = V_{\ge 0} J_{\ge 0} V_{\ge 0}^{-1},\end{equation}
			where 
			\begin{equation}\label{def:vgeq} V_{\ge 0}=[V_+, V_0^\half]\in\C^{dn\times dn}\quad\textrm{and}\quad J_{\ge 0}=J_+\oplus J_0^\half\in\C^{dn\times dn}.\end{equation}
			Here, for each Jordan block of size $2s$ in $J_{0}$,  $J_0^\half$ contains a Jordan block of size $s$ with the same eigenvalue, and $V_0^\half$  contains as columns the first $s$ of the corresponding $2s$ columns of $V_0$.
	\end{enumerate}
\end{theorem}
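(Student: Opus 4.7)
The plan is to exploit the tight link between $P(x)$ and its block companion linearization $C_P$, whose Jordan block sizes at each $\lambda$ coincide with the partial multiplicities of $P$ at $\lambda$ by the standard Smith-form linearization theory ($xI - C_P$ is unimodularly equivalent to $P(x)\oplus I_{(d-1)n}$, so they have the same non-trivial invariant factors).

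For Part 2, the conjugation symmetry of the spectrum follows from $P=P^*$ and $\overline{P_i}=P_i^\top$: a short computation gives $\overline{\det P(x)}=\det P(\bar x)$, so $\det P\in\R[x]$ and its non-real roots come in conjugate pairs. To see that Jordan blocks at a real latent root $\lambda_0$ have even size, I would use a local Hermitian Smith factorization $P(x)=E(x)^* D(x) E(x)$ valid on a neighborhood of $\lambda_0$ with $D(x)$ diagonal and $E(x)$ analytically invertible; the condition $P(x)\succeq 0$ on $\R$ forces each diagonal entry of $D$ to be a nonnegative real-analytic scalar, hence to vanish to even order at $\lambda_0$. Each such order is a partial multiplicity of $P$, so all Jordan blocks of $C_P$ at real eigenvalues have even size, which justifies the $J_0^\half$ construction in \eqref{def:vgeq}.

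Part 3 is the crux; Part 1 (existence) then follows from it. The subspace $\mathcal{M}:=\mathrm{range}\,\bm{V_{\ge 0}\\ Z_{\ge 0}}$ is $C_P$-invariant by construction, and the pivotal technical claim is that its top block $V_{\ge 0}$ is invertible. Granted this, the formula $C_Q:=V_{\ge 0} J_{\ge 0} V_{\ge 0}^{-1}$ of \eqref{eqn:cqformula} simply represents $C_P|_\mathcal{M}$ in the basis of $\mathcal{M}$ obtained by projecting onto the first block row. The block companion structure of $C_P$ combined with $C_P$-invariance of $\mathcal{M}$ forces $C_Q$ to itself be block companion, and reading off its last block row yields the coefficients of a monic degree-$d$ matrix polynomial $Q(x)$. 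Then $Q^*Q$ and $P$ are both monic of degree $2d$ with matching latent structure: the eigenvalues of $C_P$ in the closed upper half plane are supplied by $Q$, the remaining lower-half-plane eigenvalues by $Q^*$, and the real blocks (even by Part 2) are split in half between the two factors. A degree count or Smith-form comparison forces $P=Q^*Q$.

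The principal obstacle is proving $V_{\ge 0}$ invertible. I would follow the Gohberg-Lancaster-Rodman approach of constructing a Hermitian $H\in\C^{2dn\times 2dn}$ built out of the $P_i$ (a Bezoutian-type matrix) satisfying $HC_P=C_P^*H$, so that $C_P$ is $H$-selfadjoint. Spectral subspaces of $C_P$ associated to conjugation-related eigenvalue sets are then $H$-orthogonal duals, and the matched signatures of $H$ on the upper- and lower-half-plane parts, together with the even Jordan block structure on $\R$ from Part 2, leave room for a unique $dn$-dimensional $H$-neutral $C_P$-invariant subspace, namely $\mathcal{M}$. A nonzero kernel vector of $V_{\ge 0}$ would place a nonzero element of $\{0\}\times\C^{dn}$ inside $\mathcal{M}$, contradicting this dimension/signature count. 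Uniqueness in Part 1 is then the standard analytic argument: if $Q_1,Q_2$ are two spectral factors then $U(x):=Q_1(x)Q_2(x)^{-1}$ is rational, analytic on the open upper half plane, and satisfies $U^*(x)U(x)=I$ on $\R$; a Liouville/inner-outer argument forces $U$ to be a constant unitary, and monicity of $Q_1,Q_2$ then forces $U\equiv I$.
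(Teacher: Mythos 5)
First, a point of reference: the paper does not prove this theorem; it is imported verbatim as Theorems 5.1 and 5.4 of \cite{gohberg1980spectral} and used as a black box, so there is no in-paper proof to compare yours against. Judged on its own terms, your sketch correctly reconstructs the architecture of the Gohberg--Lancaster--Rodman argument: linearization via the Smith form of $xI-C_P$, a local selfadjoint Smith form (the sign characteristic) to force even partial multiplicities at real latent roots, and the $H$-selfadjointness of $C_P$ for a block-Hankel/Bezoutian $H$ with $HC_P=C_P^*H$ to control the invariant subspace $\mathcal{M}$. One remark on economy: for the invertibility of $V_{\ge 0}$ specifically, the paper's Lemma \ref{lem:sfcond} and the Remark following it give a short Cayley--Hamilton argument that entirely avoids the indefinite-quadratic-form analysis you propose, so that portion of your plan can be replaced by something much simpler.

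The genuine gap is the final step of your Part 3: ``$Q^*Q$ and $P$ are both monic of degree $2d$ with matching latent structure\ldots\ A degree count or Smith-form comparison forces $P=Q^*Q$.'' Latent roots, partial multiplicities, and even the full Smith form do not determine a monic matrix polynomial. For instance, $\mathrm{diag}(x^2+1,\,x^2+4)$ and its conjugates $U^*\,\mathrm{diag}(x^2+1,\,x^2+4)\,U$ by non-diagonal constant unitaries $U$ are distinct monic positive definite polynomials with identical Smith data, each admitting its own spectral factor; so matching spectral data between $Q^*Q$ and $P$ cannot yield equality. What actually closes the argument in \cite{gohberg1980spectral} is the divisibility (supporting-subspace) theorem: a $dn$-dimensional $C_P$-invariant subspace $\mathcal{M}$ with $V_{\ge 0}$ invertible produces a factorization $P=RQ$ where $Q$ is read off from $C_Q=V_{\ge 0}J_{\ge 0}V_{\ge 0}^{-1}$ and $R$ is some monic degree-$d$ cofactor, and one must then prove $R=Q^*$. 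That identification is precisely where the $H$-neutrality of $\mathcal{M}$ --- i.e.\ the hypothesis $P\succeq 0$, through the signs in the sign characteristic all being $+1$ --- enters a second time. As written, you invoke $H$-neutrality only to rule out a kernel vector of $V_{\ge 0}$ landing in $\{0\}\oplus\C^{dn}$, and never again; consequently your argument cannot distinguish $Q^*Q$ from an arbitrary factorization $RQ$ supported on $\mathcal{M}$. The uniqueness argument in your Part 1 is fine modulo care at real latent roots of $Q_2$, where $U$ is a priori undefined.
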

The formula \eqref{eqn:cqformula} gives a one line algorithm for computing $Q$ given access to the exact Jordan form of $P$. The key issue
is that in order to use an approximate Jordan form $\ax{V}\ax{J}\ax{V^{-1}}$ in the formula, we must have a good bound on the condition number of $V_{\ge 0}$ 
in order to control the error incurred during inversion. Note that while Lemma \ref{lem:jnfcond} guarantees a bound on $\kappa(V)$,
this does not in general imply a bound on its submatrices; indeed, it is known that there can be square submatrices of $V$ which are singular.
The main technical contribution of this section is to prove a bound on $\kappa(V_{\ge 0})$ in terms of $\kappa(V)$ by exploiting the special structure of $V$
which arises from the structure of $C_P$. This is encapsulated in the following fact, which may be found in any reference on matrix polynomials (e.g., \cite[\S 1]{gohberg2005matrix}).

\begin{fact} If $C_P=VJV^{-1}$ is the Jordan normal form of an $n\times n$ complex matrix polynomial $P$ of degree $d$, then there is a matrix   $X\in\C^{n\times 2dn}$ such that:
			\begin{equation}\label{eqn:xjstack}V=\bm{ X\\ XJ\\ XJ^2\\ \vdots \\ XJ^{2d-1}}.\end{equation}
\end{fact}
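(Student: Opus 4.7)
My plan is to read the structure off directly from the conjugation relation $C_P V = V J$, exploiting the block-shift form of $C_P$ in \eqref{eqn:blockcomp}. First I would partition $V$ into $2d$ consecutive row-blocks $X_1, X_2, \ldots, X_{2d} \in \C^{n\times 2dn}$, each consisting of $n$ rows. For $i = 1, \ldots, 2d-1$, the $i$th block row of $C_P$ has an identity block in the $(i{+}1)$st block column and zeros elsewhere, so the $i$th block row of $C_P V$ is simply $X_{i+1}$. Equating this with the $i$th block row of $V J$, which is $X_i J$, yields the recursion
\[
X_{i+1} \;=\; X_i J, \qquad i = 1, \ldots, 2d-1.
\]

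Iterating the recursion gives $X_i = X_1 J^{i-1}$ for every $i \le 2d$, so taking $X := X_1$ produces exactly the stacked form \eqref{eqn:xjstack}. The last block row of $C_P V = V J$ contributes the additional relation $-\sum_{i=0}^{2d-1} P_i X J^i = X J^{2d}$, but this is automatic from the assumption that $(V, J)$ conjugates $C_P$ to its Jordan form; it does not need to be verified separately for the existence claim.

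There is no real obstacle here: the fact is essentially a one-line consequence of the shift structure of the block companion matrix, generalizing the familiar observation that any eigenvector of $C_P$ with eigenvalue $\lambda$ has the form $(v, \lambda v, \lambda^2 v, \ldots, \lambda^{2d-1} v)^\top$, with scalar powers of $\lambda$ replaced by matrix powers of $J$ acting on the right of the ``seed'' row-block $X$.
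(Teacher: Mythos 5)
Your argument is correct: partitioning $V$ into $2d$ block rows and reading off $C_PV=VJ$ block row by block row, using the shift structure of $C_P$, gives $X_{i+1}=X_iJ$ and hence the stacked form, and you rightly note that the last block row imposes no additional condition for the existence claim. The paper does not prove this fact at all --- it is stated as a known structural property with a citation to Gohberg--Lancaster--Rodman --- and your direct verification is exactly the standard argument one finds there.
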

We show that the least singular value of a column submatrix of any matrix of type \eqref{eqn:xjstack} may be related to the least singular values of certain block submatrices.
\begin{lemma}[Condition of Submatrices of Companion JNF]\label{lem:sfcond} Given any $Y\in\C^{n\times D}$ and $K\in \C^{D\times D}$ with $\|K\|\ge 1$, define for $k=1,2,\ldots$ the $nk\times D$ matrices:
$$ W_k:=\bm{ Y\\ YK\\ YK^2\\ \vdots \\ YK^{k-1}}.$$
Then
$$\sigma_D(W_D)\ge \frac{\sigma_D(W_{k})}{\sqrt{k}(4\|K\|)^{D(k-D+1)}}$$
for every $k\ge D$.
\end{lemma}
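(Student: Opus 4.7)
The plan is to reduce the inequality to a pointwise estimate. Since $\sigma_D(W_k) = \min_{\|v\|=1}\|W_k v\|$ and similarly for $\sigma_D(W_D)$, it suffices to prove $\|W_k v\| \le C \|W_D v\|$ for all $v \in \C^D$, where $C = \sqrt{k}(4\|K\|)^{D(k-D+1)}$. Indeed, taking $v^*$ to be a unit vector minimizing $\|W_D v\|$ then yields $\sigma_D(W_k) \le \|W_k v^*\| \le C\|W_D v^*\| = C\,\sigma_D(W_D)$, which rearranges to the desired lower bound on $\sigma_D(W_D)$.

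The key algebraic input is Cayley--Hamilton. Writing the characteristic polynomial of $K$ as $p(t) = t^D + \sum_{i=0}^{D-1} c_i t^i$, one has $K^D = -\sum_{i=0}^{D-1} c_i K^i$; and since the $c_i$ are (signed) elementary symmetric functions of the eigenvalues of $K$, the bound $T := \sum_i |c_i| \le (1 + \|K\|)^D \le (2\|K\|)^D$ follows from $\|K\| \ge 1$. For any $j \ge D$, multiplying the Cayley--Hamilton identity by $K^{j-D}$ expresses $K^j$ entirely in terms of the sliding window $K^{j-D}, \ldots, K^{j-1}$, so
\[\|Y K^j v\| \;\le\; T \max_{j-D \le m \le j-1}\|Y K^m v\| \;\le\; T\,\|W_j v\|,\]
where the last inequality uses that all $D$ of those rows appear in $W_j$ whenever $j \ge D$.

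Given this, I would iterate: for each $j \ge D$ we have $\|W_{j+1}v\|^2 = \|W_j v\|^2 + \|YK^j v\|^2 \le (1+T^2)\|W_j v\|^2$, hence $\|W_k v\|^2 \le (1+T^2)^{k-D}\|W_D v\|^2$. Taking square roots and using $(1+T^2)^{1/2} \le 1+T \le 2T$ (valid since $T \ge 2^D \ge 1$) gives $\|W_k v\| \le (2T)^{k-D}\|W_D v\| \le 2^{(D+1)(k-D)}\,\|K\|^{D(k-D)}\|W_D v\|$. A routine exponent comparison using $D \ge 1$ and $\|K\| \ge 1$ shows this is bounded by $\sqrt{k}(4\|K\|)^{D(k-D+1)}\|W_D v\|$, yielding the claim.

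The main obstacle I anticipate is choosing the right reduction of $K^j$. A naive iterated application of Cayley--Hamilton, rewriting every $K^j$ directly in terms of $K^0, \ldots, K^{D-1}$, produces coefficients that grow like $T^{j-D+1}$; when plugged into the above argument this overshoots the target by roughly a factor of $D$ in the exponent of $\|K\|$. Using the sliding-window reduction $K^j \in \mathrm{span}\{K^{j-D}, \ldots, K^{j-1}\}$ instead keeps the geometric growth linear per step, so the final exponent of $\|K\|$ is $D(k-D)$, comfortably inside the target $D(k-D+1)$ with the free $\sqrt{k}$ factor as slack. Everything else is a straightforward combination of the triangle inequality, induction, and exponent bookkeeping.
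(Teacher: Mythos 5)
Your proof follows essentially the same route as the paper's: reduce to a pointwise bound on the minimizing unit vector, apply Cayley--Hamilton to write $K^j$ in terms of the sliding window $K^{j-D},\dots,K^{j-1}$, bound the coefficients via elementary symmetric functions of the eigenvalues, and iterate with exponent bookkeeping (the paper tracks $D\max_i|c_i|$ and unrolls the recursion on individual blocks $\|YK^jx\|$, whereas you track $\sum_i|c_i|$ and recurse on $\|W_jv\|^2$ directly; both land comfortably inside the stated constant). One small slip: your justification ``valid since $T\ge 2^D\ge 1$'' is false in general --- $(2\|K\|)^D$ is an \emph{upper} bound on $T$, and $T=0$ when $K$ is nilpotent --- but this is harmless, since replacing $T$ by $\max(T,1)$ (which still satisfies $\max(T,1)\le(2\|K\|)^D$ because $\|K\|\ge 1$) repairs the step $(1+T^2)^{1/2}\le 2\max(T,1)$ and the rest of the argument is unchanged.
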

\begin{proof} Suppose $x\in\C^{D}$ is a unit vector satisfying $\|W_Dx\|=\sigma_D(W_D)=:\sigma$. We will show that 
\begin{equation}\label{eqn:yksigma}\|W_kx\|\le \sigma\cdot \sqrt{k} (2D^{1/D}\|K\|)^{D(k-D+1)},\end{equation}
which yields the Lemma by using $D^{1/D}\le 2$.
Let $q$ be the characteristic polynomial of $K$. By the Cayley-Hamilton theorem, we have
$$q(K) = K^D + \sum_{0\le i\le d-1} c_i K^{i} = 0,$$
for some complex coefficients $c_i$ crudely bounded as 
$$\max_{i\le D-1} |c_i| \le 2^D\|K\|^D:=\alpha,$$
by considering their expansion as elementary symmetric functions in the eigenvalues of $K$. Using this expression, we obtain the identity:
\begin{align*}
YK^{j}x = YK^{j-D}K^Dx = -\sum_{0\le i\le D-1} c_i YK^{j-D}K^ix,
\end{align*}
for every $j\ge D$.
By the triangle inequality, this yields:
\begin{align*}
    \|YK^j x\|\le \alpha D\cdot \max_{i<j}\|YK^i x\|,
\end{align*}
which applied recursively gives: $$\|YK^jx\|\le (\alpha D)^{j-D+1}\cdot \max_{i<D}\|YK^ix\|\le (\alpha D)^{j-D+1}\sigma.$$
Summing over all $j\le k$, we have:
$$\|W_kx\|^2 \le \sigma^2 + \sum_{j=D}^k (\alpha D)^{2(j-D+1)}\sigma^2 \le k(\alpha D)^{2(k-D+1)}.$$

Taking a square root establishes \eqref{eqn:yksigma} and finishes the proof.
\end{proof}
\begin{remark} Lemma \ref{lem:sfcond} is a quantitative version of the main claim of \cite[\S 2.3]{gohberg1980spectral} showing that $V_{\ge 0}$ is invertible whenever $V$ is invertible, which is central to the theory of matrix polynomials.
The proof above is an arguably simpler proof of this fact, and may be of independent interest. The original proof of \cite{gohberg1980spectral} relies on a delicate analysis of a certain indefinite quadratic form.\end{remark}

Finally, we are able to bound $\kappa(V_{\ge 0})$.
\begin{lemma}\label{lem:kappavge} In the setting of Theorem \ref{thm:sfglr1},
$$ \|V_{\ge 0}^{-1}\|\le \|V^{-1}\|\cdot \sqrt{2dn} (4+4\|C_P\|)^{dn(dn+1)}$$
and
$$\kappa(V_{\ge 0})\le \kappa(V)\cdot \sqrt{2dn} (4+4\|C_P\|)^{dn(dn+1)}.$$
\end{lemma}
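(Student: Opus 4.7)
My plan is to realize $V_{\ge 0}$ as a stacked matrix of the form from Lemma~\ref{lem:sfcond} and then invoke that lemma together with the invertibility of $V$. First, using the Fact preceding Lemma~\ref{lem:sfcond}, $V$ has the stacked form $V=[X^{T},(XJ)^{T},\ldots,(XJ^{2d-1})^{T}]^{T}$ for some $X\in\C^{n\times 2dn}$. A direct inspection of the Jordan structure shows that the $\ge 0$ columns of $V$ span a $J$-invariant subspace: the columns of $V_+$ do so by construction since $J_+$ is a block of $J$, and for each size-$2s$ Jordan block with a real eigenvalue, the first $s$ columns $v_0,\ldots,v_{s-1}$ satisfy $Jv_k\in\mathrm{span}(v_{k-1},v_k)$ and hence also span an invariant subspace. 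Letting $Q\in\C^{2dn\times dn}$ be the column-selection matrix picking out these $dn$ columns, we have $JQ=QJ_{\ge 0}$ and inductively $J^{i}Q=QJ_{\ge 0}^{i}$. Substituting into the stacked form yields
\[
VQ=\bm{ Y\\ YK\\ \vdots\\ YK^{2d-1}}=W_{2d},\qquad V_{\ge 0}=\bm{ Y\\ YK\\ \vdots\\ YK^{d-1}}=W_{d},
\]
in the notation of Lemma~\ref{lem:sfcond} with $Y=XQ\in\C^{n\times dn}$, $K=J_{\ge 0}$, and $D=dn$.

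Next I would obtain a lower bound on $\sigma_{\min}(V_{\ge 0})$ in two steps. Since $VQ$ is a column submatrix of the invertible square matrix $V$, every unit $x\in\C^{dn}$ satisfies $\|VQx\|=\|V(Qx)\|\ge\sigma_{\min}(V)=1/\|V^{-1}\|$, so $\sigma_{dn}(W_{2d})\ge 1/\|V^{-1}\|$. Then I would invoke Lemma~\ref{lem:sfcond} with $D=dn$ and $k=2dn$ to transfer this bound down to $V_{\ge 0}=W_d$, using the formal extension of the stacking to $2dn$ blocks and the fact that adding rows can only increase the smallest singular value. Combined with $\|K\|=\|J_{\ge 0}\|\le 1+\|C_P\|$ (which holds because $J_{\ge 0}$ is in Jordan form with eigenvalues of modulus at most $\|C_P\|$ and superdiagonal ones contributing at most $1$ to the operator norm), this gives $\|V_{\ge 0}^{-1}\|\le\|V^{-1}\|\,\sqrt{2dn}(4+4\|C_P\|)^{dn(dn+1)}$. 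For the condition-number bound, $\|V_{\ge 0}\|\le\|V\|$ because $V_{\ge 0}$ is both a row and column submatrix of $V$; multiplying the two estimates gives $\kappa(V_{\ge 0})\le\kappa(V)\sqrt{2dn}(4+4\|C_P\|)^{dn(dn+1)}$.

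The main obstacle I foresee is the transfer step when applying Lemma~\ref{lem:sfcond}: the lemma as stated controls $\sigma_D(W_D)$ with exactly $D=dn$ block rows on the left, whereas our $V_{\ge 0}=W_d$ has only $d<dn$ block rows. Bridging this gap requires either (i) a mild extension of Lemma~\ref{lem:sfcond} that permits $\sigma_D(W_m)$ on the left for any $m\le k$ using the same Cayley-Hamilton argument on $K$ (bounding the intermediate blocks $YK^{i}$ for $m\le i<D$ via crude operator-norm estimates), or (ii) applying the Lemma to the formal extended stacking $W_{dn}$ and separately exploiting the structural invertibility of $V_{\ge 0}$ (equivalently, the qualitative GLR theorem) to push the bound back to $W_d$. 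In either case the exponent $dn(dn+1)$ in the final factor arises as $D(k-D+1)$ with $D=dn$ and $k=2dn$, matching the Lemma's form.
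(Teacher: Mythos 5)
You follow the paper's route essentially verbatim: the stacked form of $V$, the $J$-invariance of the selected columns giving $VQ=W_{2d}$ and $V_{\ge 0}=W_d$ with $Y=XQ$ and $K=J_{\ge 0}$, the inequality $\sigma_{dn}(VQ)\ge\sigma_{2dn}(V)=1/\|V^{-1}\|$, the bound $\|J_{\ge 0}\|\le 1+\|C_P\|$, and $\|V_{\ge 0}\|\le\|V\|$ are exactly the steps of the paper's proof (the paper leaves the invariance argument $JQ=QJ_{\ge 0}$ implicit, so spelling it out is a plus). The one substantive issue is precisely the ``obstacle'' you flag, and you are right to worry about it: Lemma \ref{lem:sfcond} lower-bounds $\sigma_D(W_D)$, and with $D=dn$ the matrix $W_D=W_{dn}$ has $dn$ \emph{block} rows, i.e.\ $dn^2$ scalar rows, whereas $V_{\ge 0}=W_d$ has only $d$ block rows. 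Since $W_d$ is a row submatrix of $W_{dn}$, we only get $\sigma_{dn}(W_d)\le\sigma_{dn}(W_{dn})$, so a lower bound on $\sigma_{dn}(W_{dn})$ gives no lower bound on $\sigma_{dn}(V_{\ge 0})$ when $n>1$. Be aware that the paper's own proof applies the lemma with $D=dn$, $k=2dn$ and reads off the conclusion for $V_{\ge 0}$ directly, i.e.\ it makes exactly the identification you are uncomfortable with; your write-up is faithful to the paper but has correctly isolated the step that the stated lemma does not cover.

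Neither of your proposed patches closes this. For (i): the Cayley--Hamilton recursion controls $\|YK^jx\|$ for $j\ge D$ in terms of $\max_{i<D}\|YK^ix\|$, so to descend to $W_d$ you must control $\|YK^ix\|$ for $d\le i<dn$ by a multiple of $\sigma=\|W_dx\|$; a crude operator-norm estimate only gives $\|Y\|\,\|K\|^i$, an absolute bound not proportional to $\sigma$ (which may be exponentially small), so no multiplicative comparison between $\sigma_{dn}(W_d)$ and $\sigma_{dn}(W_{2d})$ results. For (ii): qualitative invertibility of $W_d$ carries no quantitative content. What actually repairs the argument is a degree-$d$ annihilating relation with \emph{matrix} coefficients: $(X_{\ge 0},J_{\ge 0})$ is a standard pair for the degree-$d$ factor $Q$, so $X_{\ge 0}J_{\ge 0}^{d}=-\sum_{i<d}Q_iX_{\ge 0}J_{\ge 0}^{i}$, and running the proof of Lemma \ref{lem:sfcond} with this recurrence in place of the characteristic polynomial of $K$ bounds every block $\|YK^jx\|$ for $j\ge d$ by $(d\max_i\|Q_i\|)^{j-d+1}\max_{i<d}\|YK^ix\|$; one then needs an a priori bound on $\max_i\|Q_i\|$ in terms of $\|P\|$, available from $\|Q(x)\|^2=\|P(x)\|$ for real $x$ together with interpolation at $d+1$ points. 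Without some such ingredient your argument, like the paper's, is incomplete at exactly the point you identified.
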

\begin{proof}
Letting $C_P=VJV^{-1}$, the similarity $V$ has the form \eqref{eqn:xjstack} for some $X\in\C^{n\times 2dn}$. Let $X_{\ge 0}$ be the $n\times dn$ submatrix of $X$ with columns corresponding to the columns in $V_{\ge 0}$. Apply Lemma \ref{lem:sfcond} with $D=dn, k=2dn, K = J_{\ge 0}, Y = X_{\ge 0}$, noting that $\|K\|\le 1+\|C_P\|$ since all of the diagonal entries of $J_{\ge 0}$ are eigenvalues of $C_P$ and bounded by its norm. This yields:
$$ \sigma_{dn} (V_{\ge 0}) \ge \frac{\sigma_{dn}\left(\bm{ V_{\ge 0} \\ Z_{\ge 0}}\right)}{\sqrt{2dn} (4+4\|C_P\|)^{dn(dn+1)}},$$
where $Z_{\ge 0}$ has the obvious meaning, yielding the first claim. But $\bm{ V_{\ge 0} \\ Z_{\ge 0}}$ is a column submatrix of $V$ so $$\sigma_{dn}\left(\bm{ V_{\ge 0} \\ Z_{\ge 0}}\right)\ge \sigma_{dn}(V)\ge \sigma_{2dn}(V).$$ Combining this with $\sigma_1(V_{\ge 0})\le \sigma_1(V)$, we obtain the second claim.
\end{proof}

We now present the algorithm $\specfact$ which approximately computes the $Q(\cdot)$ guaranteed by Theorem \ref{thm:sfglr1} using an approximate Jordan normal form computation and exact inversion. We rely on the following tool from symbolic computation.

\begin{theorem}[Fast Exact Inversion, \cite{storjohann2015complexity}]\label{thm:storinv} There is a randomized algorithm which given an invertible matrix $A\in\Z^{n\times n}\bite{a}$ exactly computes its inverse $A^{-1}\in\Q^{n\times n}\bit{an/an}$ in time $\O(n^3a+n^3\log\kappa(A))$. 
\end{theorem}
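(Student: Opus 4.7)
The plan is to use $p$-adic lifting, the standard framework for fast exact computation with rational matrices. First, I would select a prime $p$ of size polynomial in $n$ such that $A$ remains invertible modulo $p$; such a prime can be found with high probability by sampling a few random primes and checking each via a modular Gaussian elimination over $\F_p$, which costs $\O(n^3)$ bit operations and simultaneously produces $B_0 := A^{-1} \bmod p$.

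Next, I would apply the Newton--Hensel iteration
$$ B_{k+1} \;=\; B_k\bigl(2I - AB_k\bigr) \pmod{p^{2^{k+1}}},$$
which doubles the $p$-adic precision of the inverse at each step by quadratic convergence, yielding $A^{-1}\bmod p^{2^K}$ after $K$ steps. Taking $2^K$ exceeding twice the product of the maximum numerator and denominator magnitudes of entries of $A^{-1}$ (in lowest terms) permits exact recovery via rational reconstruction (extended Euclidean algorithm) applied columnwise, after which the entries can be put over a common denominator to match the advertised output representation. The Hadamard bound $|\det(A)|\le n!\,\|A\|^n = 2^{\O(na)}$ combined with $\|A^{-1}\| \le \kappa(A)/\|A\|$ controls both the numerator and denominator bit lengths, and an adaptive truncation of the lifting precision based on the actual magnitude of $\|A^{-1}\|$ is what produces the $\log\kappa(A)$ dependence in the running time rather than a pure $na$ dependence.

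The main obstacle is achieving the $\O(n^3 a + n^3 \log\kappa(A))$ bound rather than the $\O(n^\omega(na+\log\kappa(A)))$ one obtains from naive Newton iteration, in which the final, highest-precision matrix multiplication dominates. The essential technique is \emph{high-order lifting}: one reorganizes the iteration so that a single very high-precision multiplication is replaced by a telescoping cascade of matrix products at geometrically increasing precisions, effectively amortizing to $\O(n^3)$ work per bit of total precision rather than $\O(n^\omega)$ per matrix multiplication at peak precision. A concluding check $AB = I$ over $\Z$ (computable in $\O(n^3 a)$ time once the bit bounds are known) certifies correctness and absorbs the randomization over $p$. The delicate bookkeeping to actually attain the stated complexity is the technically intricate core of the argument and is what distinguishes this result from the more direct lifting-based algorithms.
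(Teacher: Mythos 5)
This statement is not proved in the paper at all: it is an external result imported verbatim from \cite{storjohann2015complexity} and used as a black box in Step 3 of $\specfact$. So there is no in-paper argument to compare against, and the relevant question is only whether your blind reconstruction would actually establish the theorem.

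Your outline correctly identifies the standard framework ($p$-adic Newton--Hensel lifting to precision beyond the Hadamard bound, followed by rational reconstruction, with a random prime $p$ at which $A$ stays invertible), and the adaptive-precision remark is a sensible way to account for the $\log\kappa(A)$ term. But the entire content of the theorem is the step you explicitly defer. Naive lifting plus reconstruction gives roughly $\O(n^\omega\cdot na)$, or $\O(n^4 a)$ with cubic multiplication, and your description of how to reach $\O(n^3 a)$ --- ``reorganize the iteration into a telescoping cascade of products at geometrically increasing precisions'' --- is not an accurate account of how Storjohann gets there, nor is it a proof. High-order lifting by itself yields a fast \emph{linear system solver}; obtaining the full inverse in time quasi-linear in its output size ($n^2$ entries of $\O(na)$ bits each, hence $\O(n^3a)$ total) requires the additional machinery of that paper (an outer-product/adjoint decomposition combined with a column-block divide-and-conquer), not merely a rescheduled Newton iteration. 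Since you concede that this ``delicate bookkeeping'' is ``the technically intricate core'' and omit it, the proposal asserts rather than proves the theorem. As a citation-with-context it is fine; as a proof it has a gap exactly where the difficulty lies.
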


\begin{figure}[ht]
\begin{boxedminipage}{\textwidth}
{\bf Algorithm $\specfact$}: \quad Input: Coefficients $P_0,\ldots,P_{2d-1}\in\Q^{n\times n}\bite{a/a}$ (with a common denominator) of a monic matrix polynomial $P(x)$, desired bits of accuracy $b\in\N$.\\
Output: $\ax{Q_0},\ldots,\ax{Q_{d-1}}\in\C_\b^{n\times n}\bit{a(dn)^3+b}$ or
or a certificate that $P(x)\nsucceq 0$ for some $x\in\R$.\\
Guarantee: If $P(x)\succeq 0$ then  $\maxn{\ax{Q}(\cdot)-Q(\cdot)}\le 2^{-b}\maxn{Q(\cdot)}$ for $P(x)=Q^*(x)Q(X)$.
\begin{enumerate}
	\item Compute an approximate Jordan Normal Form $(\ax{V},\ax{J})=\jnf(C_P,b'')$ of $C_P$ using Corollary \ref{rem:commonjnf}, with $\ax{J},\ax{V}\in \C_\b^{n\times n}\bit{b''/b''}$ where $b''$ is chosen to be the least integer such that $$f_1(b'')+f_2(b'')\le 2^{-b}\|P(\cdot)\|_{max},$$
	where $f_1,f_2$ are defined in \eqref{eqn:lasttwo},\eqref{eqn:firstterm}.
	Determine its eigenvalues on\footnote{This can be determined by testing if there are any approximate eigenvalues $\ax{\lambda}$ with $|\ax{\lambda}-\overline{\ax{\lambda}}|\ll 2^{-an^2d^2}$ by Theorem \ref{thm:mignotte}.}, below, and above the real line. If any Jordan block corresponding to a real eigenvalue has odd size, output ``$P(x)\nsucceq 0$''.  
    \item Let
		$$ \ax{J}=:\bm{ \ax{J_+} & &\\ &   \ax{J_0} & \\ & & \ax{J_-}},\qquad V=:\bm{\ax{V_+} & \ax{V_0} & \ax{V_-}\\ * & * & *}$$
		be a block decomposition such that $\ax{J_+}$ corresponds to eigenvalues of $C_P$ in the open upper half plane and $\ax{J_0}$ corresponds to real eigenvalues of $C_P$.
	\item Output the negative of the last row of 		\begin{equation}\ax{C_Q} := \ax{V_{\ge 0}} \ax{J_{\ge 0}} \ax{V_{\ge 0}^{-1}},\end{equation}
			where 
			\begin{equation} \ax{V_{\ge 0}}:=[\ax{V_+}, \ax{V_0}^\half]\in\C^{dn\times dn}\quad\textrm{and}\quad \ax{J_{\ge 0}}:=\ax{J_+}\oplus \ax{J_0}^\half\in\C^{dn\times dn}\end{equation}
		 and  $(\cdot)^\half$ is defined as in Theorem \ref{thm:sfglr1}. The approximate inverse $\ax{V_{\ge 0}^{-1}}$ is computed by exactly computing $(\ax{V_{\ge 0}})^{-1}$ using Theorem \ref{thm:storinv} and letting 
		 $\ax{V_{\ge 0}^{-1}}=\round_{b''}((\ax{V_{\ge 0}})^{-1})$. 
\end{enumerate}
\end{boxedminipage}
\end{figure}

\begin{theorem}\label{thm:specfactmain} The algorithm $\specfact$ satisfies its guarantees and runs in $\O((dn)^6a+(dn)^4a^4+(dn)^3b)$ bit operations.
\end{theorem}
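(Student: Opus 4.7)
I would organize the argument around three stages: (i) correctness of the ``$P(x)\nsucceq 0$'' certificate, (ii) a forward error bound on $\ax{C_Q}$, and (iii) runtime bookkeeping. For stage (i), I would invoke Theorem \ref{thm:sfglr1}(2): when $P(x)\succeq 0$, every Jordan block of $C_P$ associated to a real eigenvalue has even size. Since Step 1 of $\jnf$ computes the Frobenius form exactly and Corollary \ref{cor:rootmult} recovers multiplicities (hence Jordan block sizes) exactly, the block-size data returned is exact. To distinguish real from non-real eigenvalues, I would argue that $\chi_{C_P}$ has real coefficients (because $\det P(x)$ is real on $\R$ from Hermiticity of the $P_i$), so Theorem \ref{thm:mignotte} forces any non-real root to have imaginary part at least $2^{-\Omega(a(dn)^2)}$, which comfortably exceeds the eigenvalue error $2^{-b''}$ guaranteed by $\jnf$; this justifies the footnote's test.

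Stage (ii) is the heart of the argument and I would handle it by chaining four perturbation estimates. First, Theorem \ref{thm:jnfmain} applied via Corollary \ref{rem:commonjnf} delivers $\ax{V},\ax{J}\in\C_\b\bit{b''/b''}$ with $\|V-\ax{V}\|\le 2^{-b''}\|V\|$, $\|J-\ax{J}\|\le 2^{-b''}\|J\|$, and $\kappa(\ax{V})\le 2^{\O(a(dn)^3)}$. Second, passing to the relevant column/principal submatrices can only decrease operator norm, so the same additive bounds hold verbatim for $V_{\ge 0},\ax{V_{\ge 0}},J_{\ge 0},\ax{J_{\ge 0}}$ (with the $(\cdot)^\half$ truncation preserving closeness between matched Jordan blocks since distinct blocks correspond to distinct eigenvalues separated by $\gap(\chi_{C_P})$). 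Third, Lemmas \ref{lem:jnfcond} and \ref{lem:kappavge} combined with $\|C_P\|\le \poly(dn)2^a$ give $\kappa(V_{\ge 0})\le 2^{\O(a(dn)^3)}$; taking the implicit constant in \eqref{eqn:b2set} large enough ensures $\|V_{\ge 0}-\ax{V_{\ge 0}}\|\cdot\|V_{\ge 0}^{-1}\|\le 1/2$, so \eqref{eqn:kappainv} yields $\|(\ax{V_{\ge 0}})^{-1}-V_{\ge 0}^{-1}\|\le 2^{-b''+\O(a(dn)^3)}\|V_{\ge 0}^{-1}\|$, and the $b''$-bit rounding of Step 3 contributes only an additional $\O(dn)\cdot 2^{-b''}$, negligible by comparison. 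Fourth, a triangle inequality on the three-factor product $V_{\ge 0}J_{\ge 0}V_{\ge 0}^{-1}$, absorbing factor norms $\|V\|\le 2^{\O(a(dn))}$ and $\|J\|\le 2^{\O(a)}$, yields $\|C_Q-\ax{C_Q}\|\le 2^{-b''+\O(a(dn)^3)}\|C_Q\|$. Since $Q$ is monic, $\maxn{Q(\cdot)}\ge 1$ and $\maxn{C_Q}\le \maxn{Q(\cdot)}$, so reading off the last row and converting from operator to max norm costs only a $\poly(dn)$ factor that the $\O$ absorbs; the choice $b''=\O(a(dn)^3)+b$ then produces the advertised $2^{-b}\maxn{Q(\cdot)}$ output accuracy.

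For stage (iii), Step 1 applies $\jnf$ to the integer matrix obtained by clearing the common denominator of $C_P$ (bit length still $\tilde O(a)$), costing $\O((dn)^{\omega+3}a+(dn)^4 a^2+(dn)^\omega b'')$ by Theorem \ref{thm:jnfmain}. Step 3 performs one exact inversion via Theorem \ref{thm:storinv} on the integer matrix $2^{b''}\ax{V_{\ge 0}}$; since $\log\kappa(\ax{V_{\ge 0}})=\O(a(dn)^3)$ by \eqref{eqn:kappakappa} applied to the Neumann bound above, this costs $\O((dn)^3 b''+(dn)^6 a)$. Two $dn\times dn$ matrix products at entry bit length $b''$ cost $\O((dn)^\omega b'')$. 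Substituting $b''=\O(a(dn)^3)+b$ and $\omega\le 3$ collapses everything to $\O((dn)^6 a+(dn)^4 a^2+(dn)^3 b)$.

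The main obstacle is stage (ii): Lemma \ref{lem:kappavge} controls the \emph{exact} $V_{\ge 0}$, while the algorithm must invert $\ax{V_{\ge 0}}$. Transferring the condition-number bound requires choosing $b''$ large enough that the Neumann perturbation \eqref{eqn:kappainv} applies and simultaneously large enough to absorb the $2^{\O(a(dn)^3)}$ amplification in the three-factor product, yet not so large as to blow up either the inversion cost (linear in $b''$ plus logarithmic in $\kappa$) or the multiplication cost. The selection $b''=\Theta(a(dn)^3)+b$ is essentially the minimum that satisfies all three constraints at once.
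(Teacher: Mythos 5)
Your proposal is correct and follows essentially the same route as the paper: invoke Theorem \ref{thm:sfglr1} for the certificate and the formula $C_Q=V_{\ge 0}J_{\ge 0}V_{\ge 0}^{-1}$, transfer the $\jnf$ error and condition bounds to the submatrices via Lemmas \ref{lem:jnfcond} and \ref{lem:kappavge} together with \eqref{eqn:kappainv}--\eqref{eqn:kappakappa}, and do the same runtime accounting with $b''=\O(a(dn)^3)+b$. Your elaborations (the Mahler-gap argument for separating real from non-real eigenvalues, and the observation that additive submatrix bounds suffice) fill in details the paper relegates to a footnote or states tersely, but the argument is the same.
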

\begin{proof}
Item (2) of Theorem \ref{thm:sfglr1} shows that $P(x)\nsucceq 0$ if there is an odd size Jordan block with real eigenvalue. 

Assuming this is not the case, that theorem shows that the exact spectral factor $Q$ is given by the last row of $C_Q=V_{\ge 0}J_{\ge 0}V_{\ge 0}^{-1}$. We now prove that the quantity $\ax{V_{\ge 0}}\ax{J_{\ge 0}}\ax{J_{\ge 0}}^{-1}$ computed by $\specfact$ is close to $C_Q$. This is a consequence of the following estimates. Given Lemma \ref{lem:kappavge}, the  arguments are essentially identical to those in the proof of Lemma \ref{lem:jnfcond} and Theorem \ref{thm:jnfmain} (the key point being that the inverse of a well-conditioned matrix is stable under small enough perturbations).
\newcommand{\jge}{J_{\ge 0}}
\newcommand{\vge}{V_{\ge 0}}

Let $B$ be the maximum of $n2^{a}$ (which is an upperbound on $\|C_P\|$) and the two explicit upper bounds on $\|V\|$ and $\|V^{-1}\|$ in Lemma \ref{lem:jnfcond}(iii) (noting that the bit size $\bite{U}$ can be read off from the matrix $U$ produced during the execution of $\jnf$, so $B$ is easily computable).  It follows by Lemma \ref{lem:kappavge} and the guarantees of JNF that:
\begin{equation}
    \|V\|,\|J\|,\|\vge\|,\|\jge\|\le B,\qquad \|\ax{\vge}\|,\|\ax{\jge}\|\le 2B,\qquad \|\vge^{-1}\|\le 2^{2(a+\lg n)d^2n^2}B=:B'.
\end{equation}

Applying the triangle inequality thrice, we decompose the output error of $\specfact$ as:
\begin{align*}
    \|\ax{\vge}\ax{\jge}\ax{\vge^{-1}}-\vge \jge \vge^{-1}\| &\le \|\ax{\vge}\ax{\jge}\ax{\vge^{-1}}-\ax{\vge}\ax{\jge} \vge^{-1}\|\\
    &+ 
    \|\ax{\vge}\ax{\jge}\vge^{-1}-\ax{\vge}\ax{\jge} \vge^{-1}\|\\
        &+ 
    \|\ax{\vge}\jge\vge^{-1}-\vge\jge \vge^{-1}\|\\
    &\le 4B^2\|\ax{\vge^{-1}}-\vge^{-1}\| \\&+ B'\cdot 2B^2\|\ax{\jge}-\jge\| \\&+B'\cdot B^2 \|\ax{\vge}-\vge\|.
\end{align*}

The sum of the last two terms is  bounded by
\begin{equation}\label{eqn:lasttwo}2^{2(a+\lg n)d^2n^2}\cdot 2B^2 (2^{-b''}\|J\|+2^{-b''}\|V\|) \le 2^{-b''+2(a+\lg n)d^2n^2}\cdot 2B^3 =:f_1(b'').\end{equation}

For the first term, observe that whenever 
\begin{equation}\label{eqn:bsmall}
    2^{-b''}\le B'/2
\end{equation}
we have
\begin{align}4B^2\|\ax{\vge^{-1}}-\vge^{-1}\|&=4B^2\|\round_{b''}((\ax{V_{\ge 0}})^{-1})-\vge^{-1}\|
\\&\le 4B^2(\|\round_{b''}((\ax{V_{\ge 0}})^{-1})-(\ax{V_{\ge 0}})^{-1}\| + \| (\ax{V_{\ge 0}})^{-1} - \vge^{-1}\|)
\\&\le 4B^2(n2^{-b''}+\frac{2^{-b''}\|\vge^{-1}\|}{1-2^{-b''}\|\vge^{-1}\|}\|\vge^{-1}\|
\\&\le 4B^2(n2^{-b''}+\frac{2^{-b''}B'}{(1/2)}B')
\\\label{eqn:firstterm}&\le 4B^2(2n2^{-b''}(B')^2) =: f_2(b'').
\end{align}
By our choice of $b''$ in Line 1, the advertised error bound for the output
	follows. Note that $B=2^{O^*(a(dn)^3)}$ in the worst case.

{\em Complexity.} The running time of $\jnf$ in Step 1 is $\O((dn)^{\omega+3}a+(dn)^4a^2+(dn)^\omega b'')$. Step 2 does not involve any computation. The time taken to exactly invert $\ax{\vge}$ in Step 3 using Theorem \ref{thm:storinv} (after pulling out the common dyadic denominator to obtain an integer matrix) is $\O((dn)^3\cdot (b''+a(dn)^3+b))$ by the estimate $\kappa(\ax{\vge})=2^{O^*(a(dn)^3)}$ which follows from \eqref{eqn:kappakappa} and the bound on the first term above. The time taken to round down the entries of $\ax{\vge}^{-1}$ to $b''$ bits is $\O((dn)^2b'')$. The time taken to multiply together the three matrices is $\O((dn)^\omega b'')$. Thus, the total number of bit operations is dominated by $$\O((dn)^{\omega+3}a+(dn)^4a^2+(dn)^3 b'') = \O((dn)^6a+(dn)^4a^2+(dn)^3b),$$ as advertised.
\end{proof}

\begin{corollary}[Spectral Factorization of Non-Monic Polynomials]\label{cor:nonmon} Suppose $P(x)=VV^*x^{2d}+\sum_{i=0}^{2d-1} x^iP_i$ is a positive semidefinite Hermitian matrix polynomial with $V, P_i\in\Q^{n\times n}\bit{a/a}$ with a common denominator and $V$ invertible. Then an approximate spectral factorization of $P(x)$ accurate to $b-\O(a)$ bits (as in Theorem \ref{thm:specfactmain}) can be computed in expected $\O((dn)^6an+(dn)^4(an)^2+(dn)^3b)$ bit operations.
\end{corollary}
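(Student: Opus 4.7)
The plan is to reduce to the monic case by the symmetric substitution $\tilde{P}(x) := V^{-1} P(x) (V^*)^{-1} = x^{2d} I + \sum_{i=0}^{2d-1} x^i V^{-1} P_i (V^*)^{-1}$, which is monic of degree $2d$ and still satisfies $\tilde{P}(x) \succeq 0$ for all real $x$. If $\tilde{Q}$ is the unique monic spectral factor of $\tilde{P}$ guaranteed by Theorem \ref{thm:sfglr1}, then $Q(x) := \tilde{Q}(x) V^*$ satisfies
\[
Q^*(x) Q(x) = V \tilde{Q}^*(x) \tilde{Q}(x) V^* = V \tilde{P}(x) V^* = P(x),
\]
and its latent roots coincide with those of $\tilde{Q}$ (since $V$ is invertible), hence lie in the closed upper half plane. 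The algorithm is therefore: (i) exactly compute $V^{-1}$; (ii) form the coefficients of $\tilde{P}$; (iii) call $\specfact$ on $\tilde{P}$ with an appropriately large accuracy parameter $b''$; (iv) multiply each output coefficient by $V^*$ on the right.

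For the bit-length bookkeeping, write $V = M/q_V$ with $M \in \Z^{n \times n}\bit{a}$ and $|q_V| \le 2^a$. Fact \ref{fact:arithmetic} together with the adjugate formula yields $V^{-1} \in \Q^{n \times n}\bit{an/an}$ over a common integer denominator of bit length $O(an)$; after clearing, each $\tilde{P}_i = V^{-1} P_i (V^*)^{-1}$ therefore lies in $\Q^{n\times n}\bit{an/an}$ with a single common denominator across all $i$. Exact inversion of $M$ via Theorem \ref{thm:storinv}, using $\log \kappa(M) \le O(an)$ from \eqref{eqn:invbound}, costs $\O(n^4 a)$ bit operations, and the subsequent formation of the $\tilde{P}_i$ is dominated by the main running time. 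Invoking Theorem \ref{thm:specfactmain} on $\tilde{P}$ with input bit length $O(an)$ and accuracy parameter $b'' = b + \O(a)$ takes
\[
\O\bigl((dn)^6 \cdot an + (dn)^4 (an)^2 + (dn)^3 b\bigr)
\]
bit operations, matching the advertised bound; the final $O(d)$ right-multiplications of $\ax{\tilde Q}_i$ by $V^*$ are $n\times n$ rational products of manageable bit length and are absorbed.

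It remains to track the forward error. The guarantee from $\specfact$ is $\maxn{\ax{\tilde Q}(\cdot) - \tilde Q(\cdot)} \le 2^{-b''} \maxn{\tilde Q(\cdot)}$, so $\ax Q_i - Q_i = (\ax{\tilde Q}_i - \tilde Q_i) V^*$ together with $\maxn{V^*} \le 2^{\O(a)}$ and the standard conversion between $\maxn{\cdot}$ and $\|\cdot\|$ propagates this to $\maxn{\ax Q(\cdot) - Q(\cdot)} \le 2^{\O(a) - b''} \maxn{\tilde Q(\cdot)}$. Since the leading coefficient of $Q$ is $V^*$, a nonzero rational matrix with denominator of bit length $a$, we have $\maxn{Q(\cdot)} \ge \maxn{V^*} \ge 2^{-a}$; choosing the hidden constant in $b''$ large enough and using the bound on $\maxn{\tilde Q(\cdot)}$ relative to $\maxn{Q(\cdot)}$ obtained via $\tilde Q = Q V^{-*}$ then yields $\maxn{\ax Q(\cdot) - Q(\cdot)} \le 2^{-(b - \O(a))}\maxn{Q(\cdot)}$. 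The main obstacle of the proof is precisely this bit-length accounting: the inversion of $V$ inflates the effective input size from $a$ to $an$, which accounts for the extra factor of $n$ in the running time relative to the monic Theorem \ref{thm:specfactmain}, and one must verify that the multiplicative forward-error structure from the monic algorithm survives the left/right action of $V$ without degrading the claimed accuracy.
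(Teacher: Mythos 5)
Your proposal is correct and follows essentially the same route as the paper: conjugate by $V^{-1}$ to obtain the monic polynomial $\tilde{P}(x)=V^{-1}P(x)V^{-*}$ with coefficients of bit length $O(an)$, run $\specfact$, and multiply back, which inflates the effective input size from $a$ to $an$ and yields exactly the stated runtime and the $b-\O(a)$ accuracy. Your recovery formula $Q(x)=\tilde{Q}(x)V^{*}$ is in fact the correct one (the paper writes $Q(x)=V\tilde{Q}(x)V^{*}$, which appears to be a typo since it does not satisfy $Q^{*}Q=P$ unless $V$ is unitary), and your bit-length and error bookkeeping, which the paper omits, is sound.
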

\begin{proof}
The rescaled polynomial $\tilde{P}(x):= x^{2d}I + \sum_{i=0}^{2d-1} x^i V^{-1}P_i V^{-*}$ is also positive semidefinite,  has coefficients in $\Q^{n\times n}\bit{an/an}$, and is monic. Applying Theorem \ref{thm:specfactmain} yields an approximate spectral factor $\tilde{Q}$, for which $Q(x) = V \tilde{Q}(x) V^*$ is an approximate spectral factor of $P(x)$ with at most a loss of $\O(a)$ bits of accuracy.
\end{proof}

\section{Discussion and Future Work}
\label{sec:discussion}

For historical context, proving bit complexity bounds on fundamental linear algebra computations
(such as inversion, polynomial matrix inversion, Hermite/Smith/Frobenius normal forms \cite{kannan1979polynomial,kannan1985solving,storjohann1997fast,storjohann1998n,storjohann2001deterministic, gupta2011computing,zhou2015deterministic,kaltofen2015complexity}) has been
a vibrant topic in theoretical computer science and symbolic computation since the 70's, with near-optimal arithmetic and bit complexity
bounds being obtained for several of these problems within the last decade (e.g. \cite{storjohann2015complexity}). However,
this program did not reach the same level of completion for problems of a spectral nature,
such as the ones studied in this paper. While the polynomial time bounds obtained in this paper are modest,
we hope they will stimulate further work on these fundamental problems, as well as the important special case of efficiently diagonalizing a diagonalizable matrix in the forward error model, which remains unresolved (in that we don't know the correct exponent of $n$; the recent work \cite{banks2020pseudospectral} obtains nearly matrix multiplication time for the backward error formulation of the problem).

Some concrete questions left open by this work are:
\begin{enumerate}
    \item Improve the running time for computing the JNF of a general matrix. The best known running time for computing the eigenvalues of a matrix is roughly $O(n^{\omega+1}a)$ \cite{pan1999complexity}, so this seems like a reasonable goal to shoot for. The current bottleneck is the bound of $2^{\O(an^3)}$ on the condtion number of the similarity $V$, which could conceivably be improved to $2^{\O(an^2)}$.
    \item Improve the running time for computing the JNF of the block companion matrix of a matrix polynomial by exploiting its special structure, particularly \eqref{eqn:xjstack}. This would yield faster algorithms for spectral factorization.
\end{enumerate}

\subsubsection*{Acknowledgments}
We thank the anonymous referees of a previous version of this paper, whose thoughtful comments greatly improved the presentation. We thank Bill Helton, Cl\'ement Pernet, Pablo Parrilo, Mario Kummer, Rafael Oliveira, and Rainer Sinn for helpful discussions, as well as the Simons Institute for the Theory of Computing, where a large part of this work was carried out during the ``Geometry of Polynomials'' program.
\bibliographystyle{alpha}

\begin{thebibliography}{BGVKS20}

\bibitem[AC94]{ar1994reliable}
Sigal Ar and Jin-Yi Cai.
\newblock Reliable benchmarks using numerical instability.
\newblock In {\em SODA}, pages 34--43, 1994.

\bibitem[AIP07]{aylward2007explicit}
Erin~M Aylward, Sleiman~M Itani, and Pablo~A Parrilo.
\newblock Explicit sos decompositions of univariate polynomial matrices and the
  kalman-yakubovich-popov lemma.
\newblock In {\em 2007 46th IEEE Conference on Decision and Control}, pages
  5660--5665. IEEE, 2007.

\bibitem[Bat12]{batenkov2012norm}
Dmitry Batenkov.
\newblock On the norm of inverses of confluent vandermonde matrices.
\newblock {\em arXiv preprint arXiv:1212.0172}, 2012.

\bibitem[BGVKS20]{banks2020pseudospectral}
Jess Banks, Jorge Garza-Vargas, Archit Kulkarni, and Nikhil Srivastava.
\newblock Pseudospectral shattering, the sign function, and diagonalization in
  nearly matrix multiplication time.
\newblock In {\em 2020 IEEE 61st Annual Symposium on Foundations of Computer
  Science (FOCS)}, pages 529--540. IEEE, 2020.

\bibitem[BPSV19]{blekherman2019low}
Grigoriy Blekherman, Daniel Plaumann, Rainer Sinn, and Cynthia Vinzant.
\newblock Low-rank sum-of-squares representations on varieties of minimal
  degree.
\newblock {\em International Mathematics Research Notices}, 2019(1):33--54,
  2019.

\bibitem[Bra64]{brand1964companion}
Louis Brand.
\newblock The companion matrix and its properties.
\newblock {\em The American Mathematical Monthly}, 71(6):629--634, 1964.

\bibitem[BW11]{bakonyi2011matrix}
Mih{\'a}ly Bakonyi and Hugo~J Woerdeman.
\newblock {\em Matrix completions, moments, and sums of Hermitian squares},
  volume~37.
\newblock Princeton University Press, 2011.

\bibitem[Cai94]{cai1994computing}
Jin-yi Cai.
\newblock Computing jordan normal forms exactly for commuting matrices in
  polynomial time.
\newblock {\em International Journal of Foundations of Computer Science},
  5(03n04):293--302, 1994.

\bibitem[CLR95]{choi1995sums}
Man-Duen Choi, Tsit~Yuen Lam, and Bruce Reznick.
\newblock Sums of squares of real polynomials.
\newblock In {\em Proceedings of Symposia in Pure mathematics}, volume~58,
  pages 103--126. American Mathematical Society, 1995.

\bibitem[DR10]{dritschel2010operator}
Michael~A Dritschel and James Rovnyak.
\newblock The operator fej{\'e}r-riesz theorem.
\newblock In {\em A glimpse at Hilbert space operators}, pages 223--254.
  Springer, 2010.

\bibitem[EJL09]{ephremidze2009simple}
Lasha Ephremidze, Gigla Janashia, and Edem Lagvilava.
\newblock A simple proof of the matrix-valued fej{\'e}r-riesz theorem.
\newblock {\em Journal of Fourier Analysis and Applications}, 15(1):124--127,
  2009.

\bibitem[Eph14]{ephremidze2014elementary}
Lasha Ephremidze.
\newblock An elementary proof of the polynomial matrix spectral factorization
  theorem.
\newblock {\em Proceedings. Section A, Mathematics-The Royal Society of
  Edinburgh}, 144(4):747, 2014.

\bibitem[ESS17]{ephremidze2017algorithmization}
Lasha Ephremidze, Faisal Saied, and Ilya~Matvey Spitkovsky.
\newblock On the algorithmization of janashia-lagvilava matrix spectral
  factorization method.
\newblock {\em IEEE Transactions on Information Theory}, 64(2):728--737, 2017.

\bibitem[Gau62]{gautschi1962inverses}
Walter Gautschi.
\newblock On inverses of vandermonde and confluent vandermonde matrices.
\newblock {\em Numerische Mathematik}, 4(1):117--123, 1962.

\bibitem[Gie95]{giesbrecht1995nearly}
Mark Giesbrecht.
\newblock Nearly optimal algorithms for canonical matrix forms.
\newblock {\em SIAM Journal on Computing}, 24(5):948--969, 1995.

\bibitem[Gil92]{gil1992computation}
Isabelle Gil.
\newblock Computation of the jordan canonical form of a square matrix (using
  the axiom programming language).
\newblock In {\em Papers from the international symposium on Symbolic and
  algebraic computation}, pages 138--145, 1992.

\bibitem[GLR80]{gohberg1980spectral}
Israel Gohberg, Peter Lancaster, and Leiba Rodman.
\newblock Spectral analysis of selfadjoint matrix polynomials.
\newblock {\em Annals of Mathematics}, 112(1):33--71, 1980.

\bibitem[GLR05]{gohberg2005matrix}
Israel Gohberg, Peter Lancaster, and Leiba Rodman.
\newblock {\em Matrix polynomials}.
\newblock Springer, 2005.

\bibitem[GLS12]{grotschel2012geometric}
Martin Gr{\"o}tschel, L{\'a}szl{\'o} Lov{\'a}sz, and Alexander Schrijver.
\newblock {\em Geometric algorithms and combinatorial optimization}, volume~2.
\newblock Springer Science \& Business Media, 2012.

\bibitem[GS02]{giesbrecht2002computing}
Mark Giesbrecht and Arne Storjohann.
\newblock Computing rational forms of integer matrices.
\newblock {\em Journal of Symbolic Computation}, 34(3):157--172, 2002.

\bibitem[GS11]{gupta2011computing}
Somit Gupta and Arne Storjohann.
\newblock Computing hermite forms of polynomial matrices.
\newblock In {\em Proceedings of the 36th international symposium on Symbolic
  and algebraic computation}, pages 155--162, 2011.

\bibitem[HHS04]{hardin2004matrix}
Douglas~P Hardin, Thomas~A Hogan, and Qiyu Sun.
\newblock The matrix-valued riesz lemma and local orthonormal bases in
  shift-invariant spaces.
\newblock {\em Advances in Computational Mathematics}, 20(4):367--384, 2004.

\bibitem[HL58]{helson1958prediction}
Henry Helson and David Lowdenslager.
\newblock Prediction theory and fourier series in several variables.
\newblock {\em Acta mathematica}, 99(1):165--202, 1958.

\bibitem[HS19]{hanselka2019positive}
Christoph Hanselka and Rainer Sinn.
\newblock Positive semidefinite univariate matrix polynomials.
\newblock {\em Mathematische Zeitschrift}, 292(1-2):83--101, 2019.

\bibitem[JLE11]{janashia2011new}
Gigla Janashia, Edem Lagvilava, and Lasha Ephremidze.
\newblock A new method of matrix spectral factorization.
\newblock {\em IEEE Transactions on information theory}, 57(4):2318--2326,
  2011.

\bibitem[JLE13]{janashia2013matrix}
G~Janashia, E~Lagvilava, and L~Ephremidze.
\newblock Matrix spectral factorization and wavelets.
\newblock {\em Journal of Mathematical Sciences}, 195(4):445--454, 2013.

\bibitem[Kan85]{kannan1985solving}
Ravindran Kannan.
\newblock Solving systems of linear equations over polynomials.
\newblock {\em Theoretical Computer Science}, 39:69--88, 1985.

\bibitem[KB79]{kannan1979polynomial}
Ravindran Kannan and Achim Bachem.
\newblock Polynomial algorithms for computing the smith and hermite normal
  forms of an integer matrix.
\newblock {\em siam Journal on Computing}, 8(4):499--507, 1979.

\bibitem[KKS86]{kaltofen1986fast}
Erich Kaltofen, M~Krishnamoorthy, and B~David Saunders.
\newblock Fast parallel algorithms for similarity of matrices.
\newblock In {\em Proceedings of the fifth ACM symposium on Symbolic and
  algebraic computation}, pages 65--70, 1986.

\bibitem[KS15]{kaltofen2015complexity}
Erich~L Kaltofen and Arne Storjohann.
\newblock The complexity of computational problems in exact linear algebra.
\newblock 2015.

\bibitem[Lan76]{langer1976factorization}
Heinz Langer.
\newblock Factorization of operator pencils.
\newblock {\em Acta Sci. Math.(Szeged)}, 38(1--2):83--96, 1976.

\bibitem[LZW97]{li1997determining}
TY~Li, Zhinan Zhang, and Tianjun Wang.
\newblock Determining the structure of the jordan normal form of a matrix by
  symbolic computation.
\newblock {\em Linear algebra and its applications}, 252(1-3):221--259, 1997.

\bibitem[M{\etalchar{+}}64]{mahler1964inequality}
Kurt Mahler et~al.
\newblock An inequality for the discriminant of a polynomial.
\newblock {\em Michigan Mathematical Journal}, 11(3):257--262, 1964.

\bibitem[Oze87]{ozello1987calcul}
Patrick Ozello.
\newblock {\em Calcul exact des formes de Jordan et de Frobenius d'une
  matrice}.
\newblock PhD thesis, 1987.

\bibitem[Pan02]{pan2002univariate}
Victor~Y Pan.
\newblock Univariate polynomials: nearly optimal algorithms for numerical
  factorization and root-finding.
\newblock {\em Journal of Symbolic Computation}, 33(5):701--733, 2002.

\bibitem[PC99]{pan1999complexity}
Victor~Y Pan and Zhao~Q Chen.
\newblock The complexity of the matrix eigenproblem.
\newblock In {\em Proceedings of the thirty-first annual ACM symposium on
  Theory of computing}, pages 507--516, 1999.

\bibitem[Ros58]{rosenblatt1958multi}
Murray Rosenblatt.
\newblock A multi-dimensional prediction problem.
\newblock {\em Arkiv f{\"o}r matematik}, 3(5):407--424, 1958.

\bibitem[RR71]{rosenblum1971factorization}
Marvin Rosenblum and James Rovnyak.
\newblock The factorization problem for nonnegative operator valued functions.
\newblock {\em Bulletin of the American Mathematical Society}, 77(3):287--318,
  1971.

\bibitem[RV96]{roch1996fast}
Jean-Louis Roch and Gilles Villard.
\newblock Fast parallel computation of the jordan normal form of matrices.
\newblock {\em Parallel processing letters}, 6(02):203--212, 1996.

\bibitem[SK01]{sayed2001survey}
Ali~H Sayed and Thomas Kailath.
\newblock A survey of spectral factorization methods.
\newblock {\em Numerical linear algebra with applications}, 8(6-7):467--496,
  2001.

\bibitem[SL97]{storjohann1997fast}
Arne Storjohann and George Labahn.
\newblock A fast las vegas algorithm for computing the smith normal form of a
  polynomial matrix.
\newblock {\em Linear Algebra and its Applications}, 253(1-3):155--173, 1997.

\bibitem[Sto98]{storjohann1998n}
Arne Storjohann.
\newblock An o (n 3) algorithm for the frobenius normal form.
\newblock In {\em Proceedings of the 1998 international symposium on Symbolic
  and algebraic computation}, pages 101--105, 1998.

\bibitem[Sto01]{storjohann2001deterministic}
Arne Storjohann.
\newblock Deterministic computation of the frobenius form.
\newblock In {\em Proceedings 42nd IEEE Symposium on Foundations of Computer
  Science}, pages 368--377. IEEE, 2001.

\bibitem[Sto15]{storjohann2015complexity}
Arne Storjohann.
\newblock On the complexity of inverting integer and polynomial matrices.
\newblock {\em computational complexity}, 24(4):777--821, 2015.

\bibitem[Yak70]{yakubovich1970factorization}
Vladimir~Andreevich Yakubovich.
\newblock Factorization of symmetric matrix polynomials.
\newblock In {\em Soviet Math. Dokl.}, volume~11, pages 1261--1264, 1970.

\bibitem[ZLS15]{zhou2015deterministic}
Wei Zhou, George Labahn, and Arne Storjohann.
\newblock A deterministic algorithm for inverting a polynomial matrix.
\newblock {\em Journal of Complexity}, 31(2):162--173, 2015.

\end{thebibliography}
\newcommand{\etalchar}[1]{$^{#1}$}

\end{document}